\definecolor{mygray}{gray}{0.85}
\newtheorem{theorem}{Theorem}
\begin{document}

\title{
Federated Spectrum Learning for Reconfigurable Intelligent Surfaces-Aided Wireless Edge Networks
}

\author{
\IEEEauthorblockN{Bo~Yang,~
 Xuelin Cao, Chongwen Huang,~
  Chau Yuen, \IEEEmembership{Fellow,~IEEE},  Marco Di Renzo, \IEEEmembership{Fellow,~IEEE},  Yong Liang Guan,~\IEEEmembership{Senior Member,~IEEE},  Dusit Niyato,~\IEEEmembership{Fellow,~IEEE},  Lijun Qian, \IEEEmembership{Senior Member,~IEEE}, and M\'erouane Debbah, \IEEEmembership{Fellow,~IEEE}
  }
 
%
%

\thanks{

B. Yang is with the School of Computer Science, Northwestern Polytechnical University, Xi'an, Shaanxi, 710129, China (email: yang$\_$bo@nwpu.edu.cn)

 X. Cao is with the School of Cyber Engineering, Xidian University, Xian
710071, China, and also with the Engineering Product Development Pillar,
Singapore University of Technology and Design, Singapore 487372 (email: caoxuelin@xidian.edu.cn).  
 
 C. Yuen is with the Engineering Product Development Pillar, Singapore University of Technology and Design, Singapore 487372 (email: yuenchau@sutd.edu.sg). 

C. Huang is with College of Information Science and Electronic Engineering, Zhejiang University, Hangzhou 310027, China, and with International Joint Innovation Center, Zhejiang University, Haining 314400, China, and also with Zhejiang-Singapore Innovation and AI Joint Research Lab and Zhejiang Provincial Key Laboratory of Info. Proc., Commun. \& Netw. (IPCAN), Hangzhou 310027, China. (E-mail: chongwenhuang@zju.edu.cn ).

M. Di Renzo is with Universit\'e Paris-Saclay, CNRS, CentraleSup\'elec, Laboratoire des Signaux et Syst\`emes, 3 Rue Joliot-Curie, 91192 Gif-sur-Yvette, France (email: marco.di-renzo@universite-paris-saclay.fr). 

Y. L. Guan is with the School of Electrical and Electronic Engineering, Nanyang Technological University, Singapore (e-mail: eylguan@ntu.edu.sg).

 D. Niyato is with the School of Computer Science and Engineering, Nanyang Technological University, Singapore (email: dniyato@ntu.edu.sg) .

 L. Qian is with the Department of Electrical and Computer Engineering and CREDIT Center, Prairie View A$\&$M University, TX 77446, USA (email: liqian@pvamu.edu). 
 
M. Debbah is with the Technology Innovation Institute, 9639 Masdar City, Abu Dhabi, United Arab Emirates (email: merouane.debbah@tii.ae) and also with CentraleSupelec, University Paris-Saclay, 91192 Gif-sur-Yvette, France.

}

 }

\maketitle
\vspace{-10mm}

\begin{abstract}
Increasing concerns on intelligent spectrum sensing call for efficient training and inference technologies. In this paper, we propose a novel federated learning (FL) framework, dubbed {federated spectrum learning (FSL)}, which exploits the benefits of reconfigurable intelligent surfaces (RISs) and overcomes the unfavorable impact of deep fading channels. \textcolor{black}{Distinguishingly, we endow conventional RISs with spectrum learning capabilities by leveraging a fully-trained convolutional neural network (CNN) model at each RIS controller, thereby helping the base station to cooperatively infer the users who request to participate in FL at the beginning of each training iteration.} To fully exploit the potential of FL and RISs, we address three technical challenges: RISs phase shifts configuration, user-RIS association, and wireless bandwidth allocation. The resulting joint learning, wireless resource allocation, and user-RIS association design is formulated as an optimization problem whose objective is to maximize the system utility while considering the impact of FL prediction accuracy. In this context, the accuracy of FL prediction interplays with the performance of resource optimization.
In particular, if the accuracy of the trained CNN model deteriorates, the performance of resource allocation worsens. 
The proposed FSL framework is tested by using real radio frequency (RF) traces and numerical results demonstrate its advantages in terms of spectrum prediction accuracy and system utility: a better CNN prediction accuracy and FL system utility can be achieved with a larger number of RISs and reflecting elements.



\end{abstract}

\begin{IEEEkeywords}
Intelligent spectrum sensing, federated learning, reconfigurable intelligent surface.
\end{IEEEkeywords}


\section{Introduction}

\IEEEPARstart{S}{ixth-generation} (6G) networks are envisioned to provide new services and applications to the users. Due to the limited spectrum resources, spectrum sensing, which is defined as the task of ascertaining the spectrum usage and the activity of mobile users, has become urgent and meaningful to improve the spectrum usage efficiency and address the spectrum scarcity problem in heterogeneous wireless networks~\cite{ss_01, ss_02}. For instance, the desire for better spectrum utilization has triggered spectrum sharing such as the coexistence of WiFi and Long Term Evolution (LTE) in unlicensed spectrum (LTE-U)~\cite{cxl}. 
 However, large-scale dynamical networks
lead to spectrum characteristics uncertainty, which entails great difficulty in robust and accurate spectrum sensing. To address this challenge,  deep learning (DL) has been employed to intelligently identify the spectrum characteristics~\cite{survey_sensing,deep_01}. Specifically, learning deep features from the radio frequency (RF) signals via appropriately offline trained deep neural networks (DNNs) has become popular~\cite{TWC_YB,WCL_sensing,Access_YB,TMC_sensing}.  

It is worth noting that, in traditional DL-based spectrum sensing methods, pre-trained DNNs that detect RF signals usually need to be fine-tuned or even re-trained once the spectrum occupancy changes significantly. As for the training of DNNs, conventional centralized training approaches (e.g., training at a central cloud) that require uploading a large amount of raw data may not be feasible in practice due to the limited communication bandwidth, data privacy and security concerns. In this case, federated learning (FL) has emerged as a suitable solution to leverage personalized dataset from a large number of mobile users in order to collaboratively train a shared DNN in a decentralized way, while achieving differential data locality~\cite{FL_google,DL_edge}. In each iteration of FL (called round), the users train their local models based on their own data and then upload, via wireless links, the model updates to an edge server where the global aggregation is performed.

\subsection{Motivation and Scope}
In the depicted context, several new research challenges emerge for the conventional FL process:
\begin{itemize}
\item \textit{Challenge-1:} Before the start of local training at each iteration of the conventional FL process, the edge server first needs to perform resource allocation (such as users selection and bandwidth allocation) by solving the corresponding optimization problem. Due to the existence of shadowing and multi-path fading in radio environments, some users may not be able to reliably send their requests to the base station (BS) via the direct links, which may be severely degraded and then unreliable. In this case, the BS cannot perform the resource allocation well since some of the requests may be missed.

\item \textit{Challenge-2:} In conventional FL, the users who finish the local training send via wireless links their local models to the BS for global model aggregation. Due to the possible unreliability of the wireless direct links and the limited wireless bandwidth, the convergence of FL may be degraded because of the low efficiency of uploading the local models. In this case, the accuracy of FL prediction may be degraded since some of the local models may be received unsuccessfully at the BS. 
\end{itemize}

To address the aforementioned challenges in conventional FL implementations, emerging and advanced technologies such as terahertz (THz) communications and ultra-massive multiple-input-multiple-output (MIMO), can be exploited to improve the wireless transmission. However, excessive hardware costs and energy consumption pose design challenges due to the relatively high operating frequency of THz communications and the large number of RF chains required. In recent years, reconfigurable intelligent surfaces (RISs), benefited from breakthroughs on the fabrication of programmable metasurfaces, have emerged as a promising technology for improving the quality of wireless links and for configuring the wireless environment by, e.g., appropriately reflecting the incident signals with the aid of a large number of {nearly-passive reflecting} elements 
with low power consumption~\cite{EURASIP, JSAC_ris_02, {Huang02},{WU01},{jsac}}. Recently, innovative RIS technologies improving wireless communications with low cost and energy consumption have been proposed~\cite{{THz_01},{JSAC_ris_01}}. For example, holographic MIMO surfaces (HMIMOS) attracted great research attention as a possible solution for realizing massive MIMO systems at a lower cost and power consumption~\cite{{holo_01},{holo_02}}. 
In addition, to collect users' training requests subject to deep fading conditions, cooperative RF spectrum sensing constitutes a promising approach for improving the sensing performance by capitalizing on the spatial sensing diversity of distributed sensors~\cite{cooperative_01,cooperative_02,cooperative_03}. 
\textcolor{black}{Motivated by these considerations, it is desirable to address the two mentioned challenges of FL by integrating FL with RISs, thereby achieving cooperative RF spectrum sensing (to improve the performance of optimization) and enhancing the quality of the wireless links (to improve the accuracy of FL prediction)~\cite{FL_RIS_01}. }

%


\vspace{-3mm}
\subsection{Related Work} 
Recently, some researchers have investigated the potential benefits of using RISs for improving the performance of FL~\cite{{RIS_FL_03},{RIS_FL_04}}. In these papers, however, the considered system scenarios are limited to a single-RIS setup, 
 and the user-RIS association problem was not relevant. 
A more complex and general FL system has been investigated in~\cite{RIS_FL_01}, where multiple RISs are deployed for performance enhancement. However, none of the above existing works explored the interplay between FL and RISs. Thus, this motivates us to design a novel federated spectrum learning (FSL) framework, thereby achieving efficient spectrum learning for wireless edge networks.

In wireless networks, the potential opportunities offered by the distributed FL paradigm have not been fully exploited, mainly because of the {straggler effect} and the unreliability of wireless channels under limited wireless resources~\cite{{FL_chen_01}}. 
Due to the unreliability of wireless channels, in particular, it may be difficult to receive correct local updates at the edge, especially when the wireless channel undergoes deep fading. As a result, the convergence performance of FL may be degraded. In this context, it is imperative to improve the FL performance from the communications perspective~\cite{{FL_chen_01},{TOC_FL},{straggler_03},{straggler_04}}. Specifically, Chen \textit{et al.}~\cite{FL_chen_01} investigated a joint learning, wireless resource allocation, and
user selection problem in FL by taking into account the wireless link packet errors and the availability of wireless resources. 
 Samarakoon \textit{et al.}~\cite{TOC_FL} introduced FL to help with the joint design of power control and resource allocation for application to vehicular communication networks. To mitigate the straggler effect, 
  Yang \textit{et al.}~\cite{straggler_03} proposed a fast global model aggregation approach to improve the performance and the convergence rate of FL via over-the-air computation (AirComp). Since uploading of the local parameters at every iteration is often inefficient in FL, Wang \textit{et al.}~\cite{straggler_04} presented a control algorithm that determines the best tradeoff between local update and global aggregation so as to minimize the loss function of FL.

  Unlike other transmission technologies such as active relay, RISs can help to reflect the incident signals through adjusting the phase shifts of their scattering elements smartly, thereby improving the wireless transmissions~\cite{add_cxl}. 
In order to explore the benefits of RISs in wireless networks, designing RIS-aided wireless communication systems based on deep learning techniques has recently received major attention~\cite{cao_magazine}. 
 In particular,  Hu \textit{et al.}~\cite{MetaSensing} considered a metasurface assisted RF sensing method which can sense the locations of objects in a 3D space, and proposed a deep reinforcement learning (DRL) algorithm to solve the formulated optimization problem.  Huang \textit{et al.}~\cite{RIS_RL_02} analyzed the joint design of transmit beamforming at the base station and phase shifts at the RIS to maximize the sum rate 
utilizing DRL. Cao \textit{et al.}~\cite{JSAC_cao} investigated an RIS-assisted multi-user downlink aerial-terrestrial communication system via multi-task learning.
 Yang \textit{et al.}~\cite{RIS_RL_01} proposed an RIS-assisted anti-jamming solution for securing wireless communications via RL. 
  Although some works focused on deep learning for RIS systems, e.g., reflecting beamforming matrix optimization~\cite{MetaSensing,{RIS_RL_01},{RIS_RL_02},JSAC_cao} and channel estimation~\cite{{DL_RIS_01},{DL_RIS_02},{DL_RIS_03}}, {endowing RISs with {active learning} capabilities for data-driven wireless networks still presents many open issues to be explored and addressed.}

  \vspace{-3mm}

\subsection{Contributions and Organizations}
 
In this paper, 
our goal is to propose a novel FSL framework to improve the accuracy of FL  with the aid of RISs, as well as to endow conventional RISs with spectrum sensing capabilities, thereby improving the system performance in a cooperative way, especially in radio environments subject to deep fading channels. In this context, our major contributions are summarized as follows:
\begin{enumerate}
\item 
{We develop a novel FSL framework for RISs-aided wireless edge networks by investigating the interplay between FL and RISs.} Based on collected RF data traces, the mobile users train their local models and then upload them to an edge server via RIS-assisted links, thereby achieving a fast yet reliable model aggregation.
At the beginning of each iteration, in addition, the aggregated model is deployed at each RIS controller to help the BS collect the requests of users by cooperatively sensing the spectrum, thereby improving the performance, especially in complex shadowing and fading radio environments. 



\item 
{We jointly optimize the user-RIS association, the phase shifts configuration and the wireless bandwidth allocation.}
In the considered FSL framework subject to limited wireless resources and a limited number of RISs, specifically, we formulate a joint user-RIS association, phase shifts design, and wireless bandwidth allocation problem, whose goal is to maximize the FL system utility. 
To solve the user-RIS association problem, we propose a matching game-based association scheme in which the users who achieve larger gains in terms of achievable SNR have higher probabilities to be associated with an RIS. The optimized phase shifts guarantee that the achieved gain of each user served by the RISs is maximized. Also, we employ the bisection search method to handle the wireless bandwidth allocation problem. 

\item {We demonstrate the advantages of the proposed FSL framework in terms of the training and inference performance.} The proposed FSL framework is tested by using real RF traces, and numerical simulations demonstrate that the proposed FSL framework outperforms other benchmark schemes regarding spectrum sensing accuracy and system utility. \textcolor{black}{Based on the conducted study, the following considerations can be made: 
\vspace{-1.0mm}
\begin{itemize}
\item The proposed FSL framework assisted by a larger number of RISs can achieve a better prediction accuracy and lower training loss. \vspace{-0mm}
\item As the number of reflecting elements of each RIS increases, the FL system utility can be significantly improved. \vspace{-0mm}
\item The performance of the proposed FSL framework is determined by the learning rate in the presence of multiple RISs. Therefore, an appropriate learning rate needs to be chosen as a function of the number of available RISs.
\end{itemize} } 
\end{enumerate} 

\textit{To our best knowledge, this is an early work that investigates the interplay between FL and RISs for efficient spectrum learning over wireless edge networks.}
The rest of this paper is organized as follows. We describe the proposed FSL framework in Section~\ref{FSL framework}. We introduce the system model and problem formulation in Section~\ref{system}. We present the algorithm solution to the considered optimization problem in Section~\ref{soving}. We illustrate the simulation results in Section~\ref{result}. Finally, conclusions are drawn in Section~\ref{conclusion}.

\textit{Notations:} A bold letter indicates a vector or matrix. An upper  case letter indicates a random variable or random parameter and a lower case letter indicates a realization of a random variable or a random parameter. $\max \{  \cdot \}$ and $\min \{  \cdot \}$ represent the maximum value and the minimum value, respectively. The amplitude of a complex number $a$ is denoted by $\left | a \right |$. The symbol $\leftarrow$ denotes the `assignment' relation. 

\begin{figure*}[t] 
\centering
\captionsetup{font={footnotesize }}
\subfigure[RIS-aided resource allocation stage]{
\includegraphics[width=3.3in,height=2.25in]{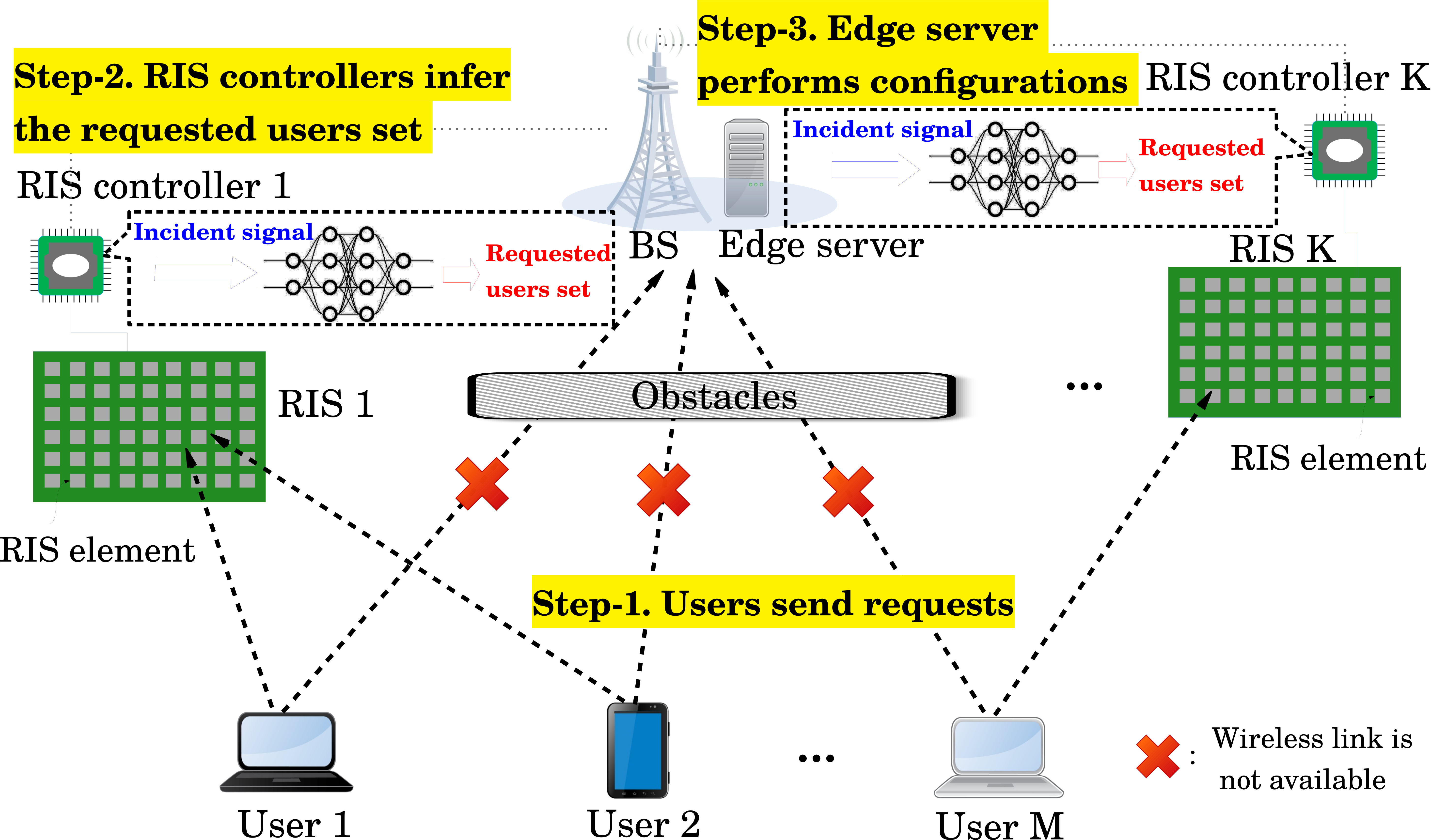}
}
\hspace{0.85mm}
\subfigure[RIS-aided FL training stage]{
\includegraphics[width=3.3in,height=2.25in]{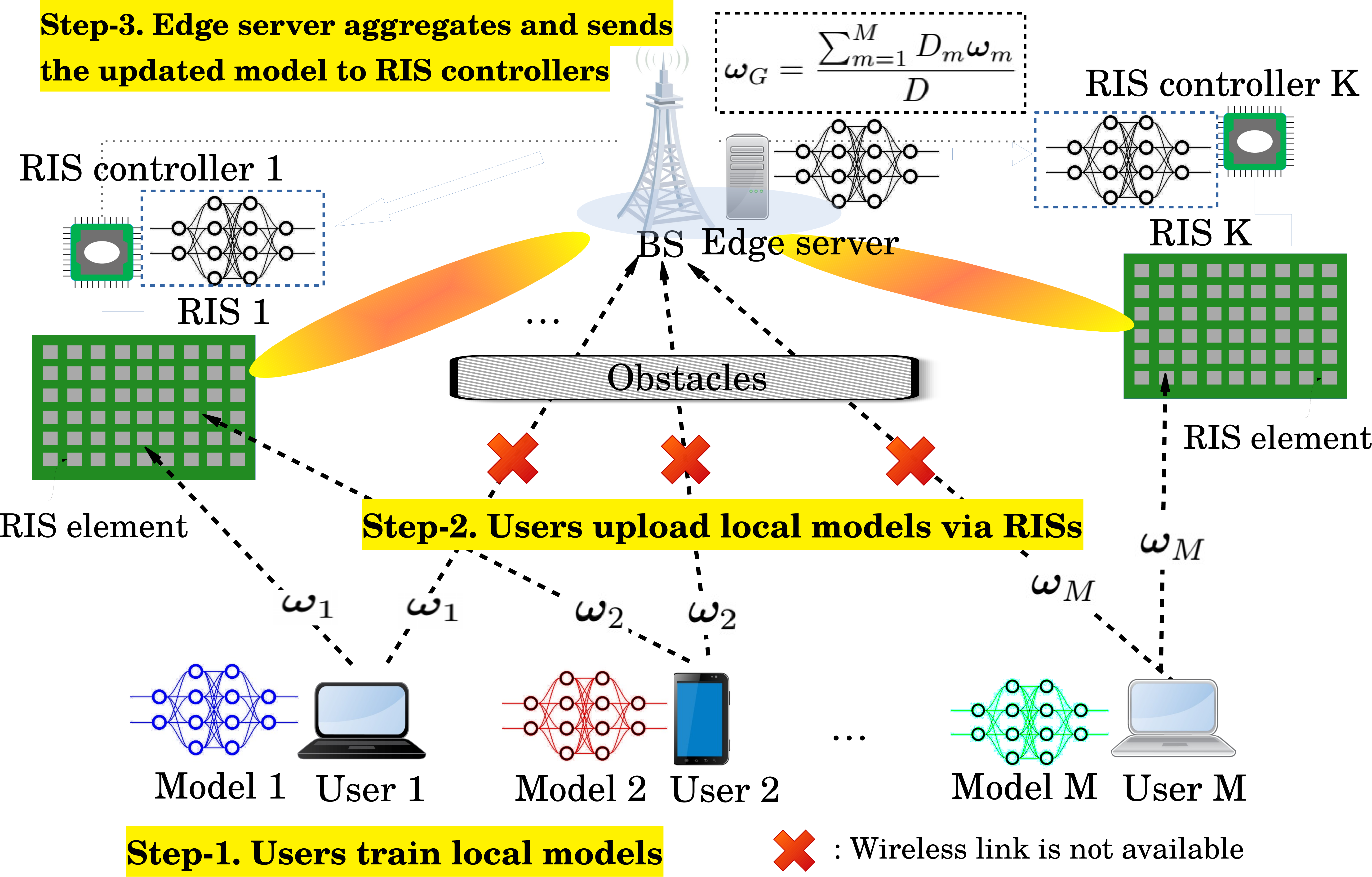}}
\caption{The proposed FSL framework encompasses two stages: the RIS-aided resource allocation stage (shown in (a)) and the RIS-aided training stage (shown in (b)). These two stages interplay with each other and constitute one iteration (or round) of FL, which is repeated several times until the FL model converges. In (a), the FL model is deployed at each RIS controller to infer the users who send requests (called requested users). In (b), RISs help to improve the wireless transmissions between the users and the base station.}
\label{system_model}
\end{figure*}

\section{Federated Spectrum Learning Framework}
\label{FSL framework}
\subsection{Network Scenario}

As shown in Fig.~\ref{system_model}, we consider a wireless network that consists of one base station (BS) co-located with an edge server, which serves a set $\cal M$  of $M$ users via $K$ RISs, where ${\cal M}=\{1,2,\ldots,M\}$. There may exist obstacles (e.g., buildings) between the users and the BS, and the RISs can help improve the quality of wireless transmission between the users and the BS. We denote the $m$th user as $U_m , \ m \in {\cal {M}}$, the local training dataset by ${\cal D}_m$ with $D_m=|{\cal D}_m|$ being the number of data samples.  In the dataset ${\cal D}_m=\{\bm{x}_{m,s}, {y}_{m,s}\}^{D_m}_{s=1}$, each data sample is constituted by an input vector $\bm{x}_{m,s} \in \mathbb{R}^{N_{d}\times 1}$ and its corresponding output value ${y}_{m,s}\in \mathbb{R}$.  As far as the data locality is concerned, we assume that the collected datasets are non-overlapping with each other, i.e., $\bm{x}_i \neq \bm{x}_j$, $i \neq j, \ \forall i, j \in \cal M$.

\vspace{-3mm}

\subsection{Design Basics} 
\textcolor{black}{In this paper, a novel FSL framework is proposed to exploit the interplay between FL and RISs. 
In the considered FL wireless network, specifically, only some users may send requests (called requested users) to participate in the training procedure of FL. Therefore, it is necessary to identify them and to appropriately allocate the available resources (e.g., wireless bandwidth) for maximizing the FL performance. In the proposed FSL framework, the requested users are inferred with the aid of neural networks deployed at the controller of the RISs. The latter neural networks are in turn trained by leveraging an FL framework that relies on the availability of RISs for reliable data transmission. These two phases are intertwined and, therefore, each training iteration of the proposed FSL framework consists of two stages: 1) the RIS-aided resource allocation stage, and 2) the RIS-aided FL training stage.}

\textcolor{black}{Specifically, in the RIS-aided resource allocation stage shown in Fig.~\ref{system_model}(a), an initial convolutional
neural network (CNN) model is deployed at each RIS controller to collaboratively infer the requested users from the impinging RF signals\footnote{We assume that the RIS controller has sufficient computation resources, e.g., field programmable gate
array (FPGA)~\cite{RIS_FPGA}, to run the trained model for inference.}, and then the obtained estimates are reported to the BS for resource allocation. In the RIS-aided federated training stage shown in Fig.~\ref{system_model}(b), given to the allocated resources, the CNN model available at the RISs' controllers are further trained with the aid of FL, where the requested users upload their local model parameters via RISs-aided wireless links. It is worth noting that FL training and resource allocation (or optimization) are two intertwined problems, as illustrated in Fig.~\ref{structure}.
In particular, if the resource allocation cannot yield an optimal solution, i.e., the wireless bandwidth is not appropriately allocated to the requested users, then the accuracy of the CNN model trained via FL worsens. If the accuracy of the trained CNN model deteriorates, in turn, the performance of resource allocation worsens. Indeed, this is a typical `chicken-and-egg' issue. To address this issue, we propose the FSL framework by leveraging a fully-trained CNN model at each RIS controller and by jointly optimizing the user-RIS association, the phase shifts configuration and the wireless bandwidth allocation.} 

 
\begin{figure}[t]
  \captionsetup{font={footnotesize}}
\centerline{ \includegraphics[width=3.25in, height=2.32in]{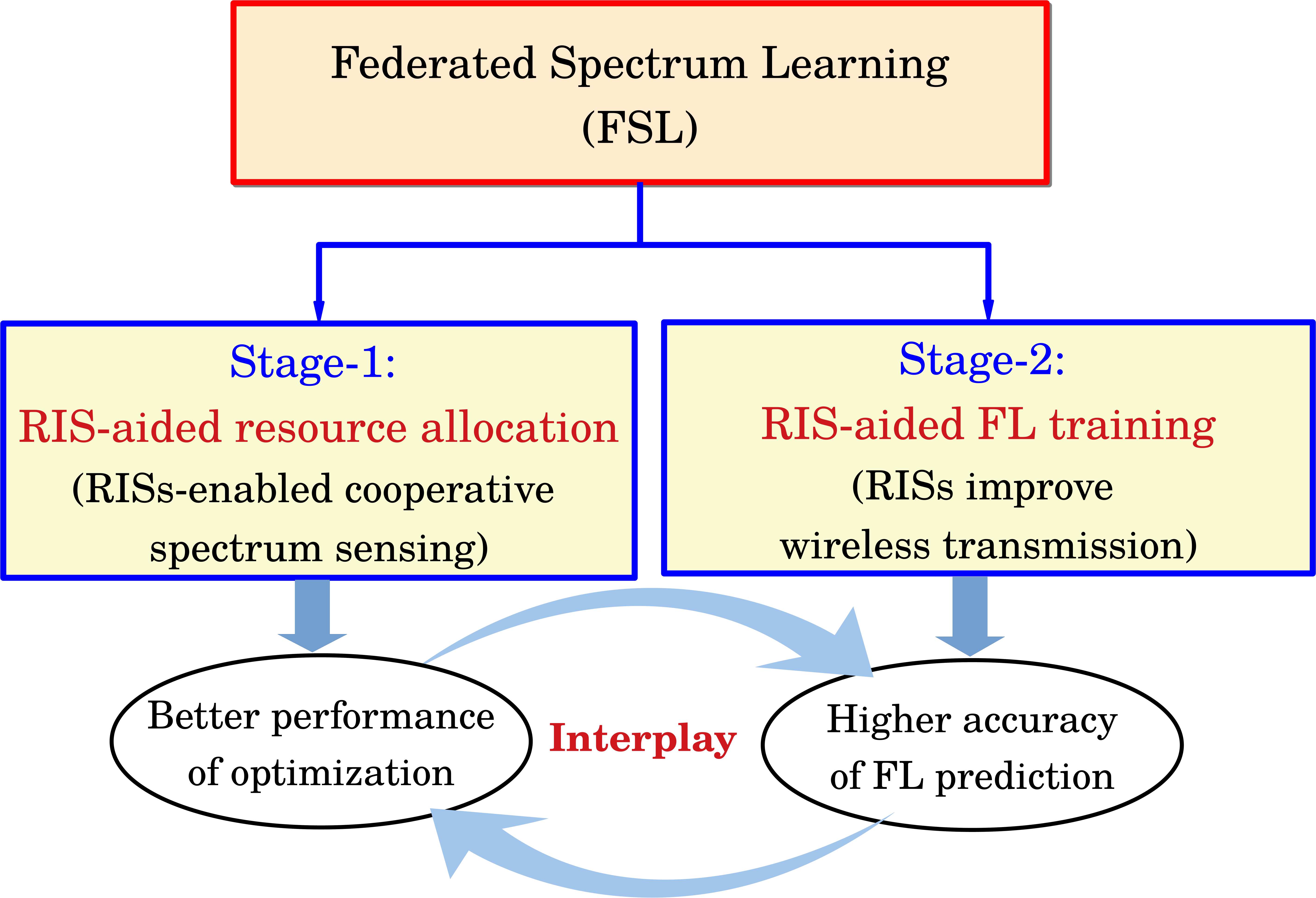}}
\vspace{-1mm}
\caption{The interplay between the two stages in the proposed FSL framework.}
\label{structure}
\end{figure}

 \theoremstyle{Remark}
\newtheorem{remark}{\textbf{Remark}}  
\theoremstyle{Assumption}
\newtheorem{assumption}{\textbf{Assumption}}

 \subsection{RIS-aided Resource Allocation Stage}
 As illustrated in Fig.~\ref{system_model}(a), the RIS-aided resource allocation stage involves three steps: {\textit{i}) Users send requests:} the users who aim to participate in the current iteration of FL send requests to the BS; {\textit{ii}) RIS controllers infer the set of requested users:} the CNN model deployed at each RIS controller estimates the ID of the requested users, which are then reported to the BS via a dedicated channel;  {\textit{iii}) Edge server performs configurations:} Based on the inferred results, the edge server performs the resource allocation by solving the optimization problem and broadcasts the result to the users and the RIS controllers.   
 \textcolor{black}{Specifically, when arriving at the RISs, the incident RF signals first undergo analog-to-digital conversion (ADC) and frequency down-conversion. Then the baseband In-phase (I) and Quadrature (Q) sequences are fed into the trained CNN model to perform online inference at the RIS controller.} 
\textcolor{black}{To achieve wireless spectrum learning at the RIS controllers, we introduce \textbf{Assumption~\ref{assumptionO1}}.}
\vspace{-4mm}

\textcolor{black}{\begin{assumption} \label{assumptionO1}
 To enable the RISs to identify the requested users according to the incident signal, we assume that each RIS is equipped with a few {`semi-active' RIS elements}, which obtain the I/Q sequences by performing ADC and down-conversion of the incident RF signals.
 \end{assumption}}
 
 By performing feed-forward calculation via the trained CNN model, the set of inferred requested users obtained by the $k$th RIS controller is represented by 
\begin{equation} \label{if_set} 
\widetilde{\cal U}_k = \begin{cases}
\{U_m\}, \ \forall m \in {\cal M},   & \text{if } \widetilde{n}_k \geq 1, \\ 
\varnothing , & \text{if } \widetilde{n}_k=0,
\end{cases}
\end{equation}
 where $\widetilde{n}_k$ indicates the inferred total number of requested users in the incident signal.
 
  
 \subsection{RIS-aided FL Training Stage}
In the considered FSL framework, for a data sample with input $\bm{x}_m$, the task is to find a model parameter vector $\bm{\omega}_m$ that characterizes the output $y_m$ by minimizing the loss function $f(\bm{\omega}_m, \bm{x}_{m,s}, {y}_{m,s})$. Since the dataset of $U_m$ is ${\cal D}_m$, the loss function of $U_m$ can be obtained as
\begin{equation} 
\ J_m(\bm{\omega}_m) = \frac{1}{D_m} \sum_{s=1}^{D_m}f(\bm{\omega}_m, \bm{x}_{m,s}, {y}_{m,s}).
\end{equation}

We denote $D=\sum_{m \in \cal M}D_m$ as the total data samples of all users, and the FL training process is performed to solve
the following optimization problem:
\begin{equation} 
\underset{\bm{\omega}_m}{\rm min} \ J(\bm{\omega}_m)  \triangleq \underset{\bm{\omega}_m}{\rm min} \left\{\frac{1}{D} \sum_{m=1}^{M}\sum_{s=1}^{D_m}f(\bm{\omega}_m, \bm{x}_{m,s}, {y}_{m,s}) \right\}.
\end{equation}


The FL training process is illustrated in Fig.~\ref{system_model}(b), which includes three steps: {\textit{i}) Users train local models:} each user $U_m$ trains the local model $\bm{\omega}_m$ based on the local dataset ${\cal D}_m$; {\textit{ii}) Users upload local models:} the users upload their trained local models (e.g., $\bm{\omega}_1, ..., \bm{\omega}_m$) via the RISs-aided wireless links to the edge server; {\textit{iii}) Model aggregation and broadcasting:} upon receiving all the uploaded gradients, the edge server aggregates the local updates and generates an updated model, $\bm{\omega}_G$, which is sent back to the requested users and the RIS controllers\footnote{Note that, different learning algorithms can be used to update the local FL model, e.g., gradient descent, to update the local FL model. The update of the global model $\bm{\omega}_G$ is given by $
\bm{\omega}_G = \frac{\sum_{m=1}^{M} D_m \bm{\omega}_m}{D}$.}. This training procedure is repeated several times until the global CNN model converges.

For each communication round, the uploading of the user's local FL model ($\bm{\omega}_m$) can be considered as a transmission that occurs in one time slot, which may not be recovered correctly at the edge server due to the impact of the wireless channel between the users and the edge server, especially when the wireless channel undergoes deep fading due to the presence of obstacles. In this case, the local FL model parameters cannot be used for the aggregation at the edge server. \textcolor{black}{To improve the quality of the global model that is aggregated at the edge server, the users can upload their local model parameters with the aid of RISs\footnote{\textcolor{black}{Note that the user-RIS association may need to be updated at the beginning of each FL round by using the proposed user-RIS association algorithm.}}.  }

Therefore, the aggregated FL model at a considered iteration in the proposed FSL framework is  
\begin{equation}\label{global_model_new}
\bm{\omega}_G = \frac{\sum_{m=1}^{M}  D_m \widetilde{\alpha}_m \bm{\omega}_m Q(\bm{\omega}_m)}{\sum_{m=1}^M  D_m \widetilde{\alpha}_m Q(\bm{\omega}_m)}.
\end{equation}
 
  We suppose that the cooperative prediction accuracy of the FL model at a considered iteration is $\eta$. We introduce a binary variable ${\alpha}_m$ for indicating whether the $m$th user sends a request or not and let $\widetilde{\alpha}_m$ be the inferred value of ${\alpha}_m$, so we have
\begin{equation} \label{alpha} 
\widetilde{\alpha}_m  \begin{cases}
=\alpha_m, \ \ {\rm with \ probability} \ \eta, \\ 
\neq \alpha_m, \ \ {\rm with \ probability} \ 1-\eta.
\end{cases}
\end{equation}
 
 Moreover, the binary variable $Q(\bm{\omega}_m)$ in (\ref{global_model_new}) indicates that the local model received at the edge server can be recovered correctly or not, which is given by
  \begin{equation} \label{Q} 
Q(\bm{\omega}_m)=  \begin{cases}
1, \ \ {\rm with \ probability} \ p, \\ 
0, \ \ {\rm with \ probability} \ 1-p.
\end{cases}
\end{equation}

In (\ref{Q}), $p={\rm{Pr}}\left \{\gamma_m \geq \gamma_{T} \right \}$ indicates the probability that $\bm{\omega}_m$ is received correctly at the BS, and  $1-p$ otherwise. $\gamma_m$ denotes the received SNR of $U_m$ and $\gamma_{T}$ is an SNR threshold. For instance, $Q(\bm{\omega}_m)=0$ indicates that the received local FL model parameter cannot be correctly received, otherwise, $Q(\bm{\omega}_m)=1$. 


 \subsection{An Intuitive Example}
 \textcolor{black}{
Different from the conventional FL procedure that is usually subject to the wireless bandwidth for model parameters uploading, the proposed RIS-aided FSL framework aims at improving FL performance by achieving wireless spectrum learning at the RIS controllers. Here, we consider an intuitive application scenario - the coexistence of multiple wireless systems, in which the wireless spectrum learning problem must be appropriately designed and addressed. Specifically, we suppose that two different wireless systems, such as WiFi and LTE-U systems, share the same unlicensed band and there exists interference between the two systems. To coordinate a fair share of the spectrum without causing undesired interference, improved wireless spectrum sensing and signal identification to detect spectrum users are needed~\cite{TWC_YB}. However, traditional power estimation and spectrum sensing can only detect whether the spectrum is occupied or not as in energy detection. To address this issue, the proposed FSL framework achieves appropriate wireless resources coordination of the two systems via RISs-enabled cooperative spectrum sensing, where the CNN model deployed at the RIS controllers for RF spectrum learning is trained via FL.} 

\textcolor{black}{An intuitive example with two system users (denoted by  ${\bf U} = \{U_{1}, U_{2}\}$) and two RISs is considered, where $U_{1}$ is a WiFi user and $U_{2}$ is a LTE-U user.} A CNN model is trained via the proposed FSL framework and then deployed at each RIS controller to learn the features of RF traces. 
In particular, upon receiving the superimposed incident signals at each RIS, the well-trained CNN infers the set of requested users by performing feed-forward calculation.  \textcolor{black}{Since each user has a binary state, i.e., `active' and `inactive', in this case, there exist four combinations (i.e., classes) of signals from the perspective of each RIS: 
\begin{itemize}
\item {Class-1: Idle}. This indicates that both $U_{1}$ and $U_{2}$ are inactive, so the collected RF traces include only the noise.
\item {Class-2: Only ${U_{1}}$}. This indicates that only $U_{1}$ is active, so the collected RF traces include only $U_1$.
\item {Class-3: Only ${U_{2}}$}. This indicates that only $U_{2}$ is active, so the collected RF traces include only $U_2$.
\item {Class-4: ${U_{1}\!+\!U_{2}}$}. This indicates that both $U_{1}$ and $U_{2}$ are active, so the collected RF traces include $U_1$ and $U_{2}$.
\end{itemize}}



Via inference, each RIS controller identifies the composition of the superimposed incident RF signals and sends the inferred result to the BS via a dedicated channel for data fusion. Note that the spectrum identification in the considered example boils down to a {four-class classification problem}. \textcolor{black}{According to the inferred result, the BS can coordinate spectrum resources between the two systems appropriately. Also, according to the allocated resources, the requested users upload their local trained models to the BS via RISs-aided wireless links.}

\section{System Model and Problem Formulation}  
\label{system}
 
\subsection{RIS-Aided Wireless Communication Model} 
By leveraging the presence of RISs, the users can upload the local gradients via the RIS-assisted wireless uplinks more reliably. For the local uploading of the model parameters, we consider an orthogonal-access schemes such as orthogonal frequency division multiple access (OFDMA) in an synchronous manner. 
We suppose that each RIS, denoted as $R_k$, $k \in {\cal K}$, is equipped with $N_k$ reflecting elements, which can be appropriately configured by the RIS controller to reflect the associated user's signal effectively towards the BS through non-overlapping frequency band. In general, the RISs can be appropriately deployed so that line-of-sight (LoS) links can be established between them and the BS. \textcolor{black}{Moreover, we assume that the channel state information (CSI) can be perfectly estimated at the BS by sending training symbols at the user and adjusting the reflection states of the RIS according to a predesigned training reflection pattern~\cite{CSI01,CSI02}. Then, the estimated CSI is fed back to the RIS controller via a dedicated channel\footnote{\textcolor{black}{Since channel estimation in RISs-aided wireless communications has been widely investigated, it is not explicitly considered in this paper.}}, such as \cite{jsac,CSI03}}. Note that due to the movement of the mobile users, the received signals are sometimes subject to a multipath fading and Doppler effect, which can be effectively  eliminated and/or mitigated by real-time tuneable RISs~\cite{Doppler}. 

As for $R_k$, the amplitude reflection coefficient is assumed to be equal to one for all the $N_k$ reflecting elements, so the phase reflection matrix is denoted by 
\begin{equation} \label{phi}
\mathbf{\Phi}_k={\rm{diag}} \left (e^{j \theta_1^k}, e^{j \theta_2^k},...,e^{j \theta_{N_k}^k} \right ),
\end{equation} 
where ${\bm \theta}_k=\left (\theta_1^k, \theta_2^k,..., \theta_{N_k}^k \right )$ denotes the vector of phase shifts that need to be optimized. 

In the considered FSL framework, the channels from the user $U_m$ to $R_k$ and from $R_k$ to the BS are given by $\textit{\textbf h}_{m,k} \in \mathbb{C}^{N_k \times 1}$ and $\textit{\textbf g}_{k} \in \mathbb{C}^{1 \times N_k}$, respectively. The channel gain of the direct link from $U_m$ to the BS is denoted by $h_{d,m}$. Besides, the channels are assumed to be quasi-static and remain nearly-constant during the transmission~\cite{WU01}. 
 Accordingly, the received local model parameters that are uploaded by the total $M$ users at the BS include the signals via the direct links and the links reflected by the RIS and the white Gaussian noise, i.e., 
 \begin{equation} \label{recv_sig}
{z} =\sum_{m \in {\cal M}} \left( h_{d,m}+ \sum_{k \in {\cal K}} r_{m,k} \ \textit{\textbf g}_{k} \mathbf{\Phi}_k \textit{\textbf h}_{m,k} \right) \sqrt{p_m} z_m + \bm{n},
\end{equation}
where 
 $p_m$ denotes the transmit power of $U_m$, $z_m$ is the unit-power information signals sent from $U_m$, and  $\bm{n} = [n_1, n_2, ..., n_M]^{\rm T}$ denotes the white Gaussian noise vector. 
 
 Besides, the user-RIS association matrix ${\bm R}$ with dimension of ${K \times M}$ is given by
 \begin{equation} \notag
{\bm R}=\begin{bmatrix}
r_{1,1}  &  r_{1,2}  & \cdots\ & r_{1,K} \\
r_{2,1}  &  r_{2,2}   & \cdots\ & r_{2,K}\\
 \vdots   & \vdots & \ddots  & \vdots  \\
r_{M,1} & r_{M,2}   & \cdots\ & r_{M,K}\\
\end{bmatrix},
\end{equation}
where $r_{m,k} \in \{0,1\}$, $r_{m,k}=1$ indicates that $R_k$ is allocated to $U_m$, and $r_{m,k}=0$, otherwise. We assume that only one RIS can be allocated to one user at most, so we have $\sum_{m \in {\cal M}} r_{m,k}\leq 1, \ \forall k \in {\cal K}$. 
 
\textcolor{black}{In conventional RIS-aided communication systems, it is known that each RIS reflects all the incident electromagnetic waves, and will steer them towards reflecting directions that depend on the direction of incidence and the phase shifts applied by the RIS elements. To achieve user-RIS association in the proposed FSL framework, we introduce \textbf{Assumption~\ref{assumptionO2}}.}
\vspace{-2mm}
\textcolor{black}{\begin{assumption} \label{assumptionO2}
Compared with the desired reflected signal, the interference power caused by the reflections via the remaining RISs on non-overlapping frequency bands are relatively low, which can be ignored~\cite{interferences}. Therefore, in this paper, we assume that each user will be associated with a certain RIS via a non-overlapping frequency band. 
\end{assumption}} \vspace{-2mm}

 Therefore, the SNR at the BS for user $U_m$ is given by
 \begin{equation} \label{sinr}
\gamma_m= \frac{{p_m} \left |h_{d,m}+ \sum_{k=1}^{K} r_{m,k} \textit{\textbf g}_{k} \mathbf{\Phi}_k \textit{\textbf h}_{m,k}  \right |^2 }{ B_m {\cal N}_0}, 
\end{equation}
where $B_m$ denotes the bandwidth allocated to $U_m$, and ${\cal N}_0$ indicates the noise power spectral density.  

Suppose that the total bandwidth between the users and the BS is $B$, and let ${\bm \beta}\!=\!\{\beta_1, \beta_2,..., \beta_M\}$ denote the wireless bandwidth allocation vector, so we have $B_m=\beta_m B$ and $\sum_{m=1}^{M} \beta_m \!\leq\! 1$. Based on (\ref{sinr}), the achievable data rate of $U_m$ can be obtained as
\begin{equation}\label{data_rate}
\phi_m \!=\! {\beta}_m B {\rm log}_2 \!\left(\!1\!+\! \frac{{p_m} \!\left |\!h_{d,m}\!+\! \sum_{k\!=\!1}^{K}r_{m,k} \textit{\textbf g}_{k} \mathbf{\Phi}_k \textit{\textbf h}_{m,k}  \!\right |^2\! }{{\beta}_m B {\cal N}_0} \!\right)\!.
\end{equation}

\subsection{FL Latency Model}

\subsubsection{Latency for Local Model Training}
Each user, $U_m, \forall m \in \cal M$, independently trains the local model based on the available local data samples (${\cal D}_m$)  with size $s_m$ (bits). Let $\theta \in [0, 1]$ be the accuracy for the local FL model training. The computation time, in general, depends on the number of local iterations, which is upper bounded by ${\cal O}({\rm log}(1/\theta))$ for different kinds of iterative algorithms~\cite{infocom}. In the following, we use this upper bound to approximate the number of iterations needed for the local computations by each user.  Specifically,
we introduce a positive constant $v$ that depends on the data size, and we denote the time of one local iteration of $U_m$ by $t^m_{cmp}$, so the upper bound of the computation time in one global iteration is
\begin{equation} \label{t_G}
t^m_{Gmp} = v {\rm log}(1/\theta) t^m_{cmp}.
\end{equation}

Since the computation delay ($t^m_{cmp}$) mainly depends on the user computing capability (i.e., the on-board chip CPU) and the size of data samples, for the user $U_m$, we denote by $c_m$ the number of CPU cycles to calculate the gradient with respect to one bit for each local iteration (cycles per bit), 
 and by $f_m$ the CPU frequency of $U_m$ (cycles per second). Therefore, the local computing delay of $U_m$ is calculated as
\begin{equation}\label{t_local}
t^m_{cmp} =\frac{c_m s_m}{f_m}, \ \ \forall m \in \cal M.
\end{equation}

Substituting (\ref{t_local}) into (\ref{t_G}), the computation delay  of the local FL model training in one global iteration is 
 \begin{equation} \label{t_G_new}
t^m_{Gmp} = v {\rm log}(1/\theta) \frac{c_m s_m}{f_m}.
\end{equation}

\subsubsection{Latency for Local Model Uploading}
After training the local FL model, all the users upload their model parameters to the BS via frequency domain-based multiple access, e.g., OFDMA. Given the uplink data rate in (\ref{data_rate}), the delay of uploading the local FL model parameters from $U_m$ to the BS via RISs-aided uplink is 

\begin{equation}\label{t_commu}
\begin{aligned}
t^m_{Com} \!&=\! \frac{z(\bm{\omega}_m)}{{\beta}_m B {\rm log}_2 \!\left(\!1\!+\!  \gamma_m \right)}\\
 \!&=\! \frac{z(\bm{\omega}_m)}{{\beta}_m B {\rm log}_2 \!\left(\!1\!+\! \frac{{p_m} \left |h_{d,m}\!+\! \sum_{k=1}^{K} r_{m,k} \textit{\textbf g}_{k} \mathbf{\Phi}_k \textit{\textbf h}_{m,k}  \!\right |^2\! }{{\beta}_m B {\cal N}_0} \right)}, 
\end{aligned}
\end{equation}
where $z(\bm{\omega}_m)$ indicates the data size (number of bits) of the local FL model that is sent from $U_m$.

\subsubsection{Latency for Global Model Broadcasting}
In this step, the BS aggregates the received local FL models to a global FL model. Then the BS broadcasts the global FL model parameters to the users and the RIS controllers. Since the data rate of the wireless downlinks could be relatively large due to the high transmission power of the BS and the large available bandwidth, the model broadcasting latency is neglected\footnote{This is a common assumption also made by other works such as~\cite{infocom}.}. Note that this paper considers the synchronous aggregation case, so the edge server needs to wait for the local FL model gradients from of all the users before the global aggregation can take place. This leads to the so-called \textit{straggler’s effect issue}, i.e., each training iteration only progresses as fast as the slowest user. 

Based on the above analysis, we define and compute the \textit{completion latency} for one FL global iteration.  
Consider an arbitrary communication round, the completion latency, denoted as $T$, is composed of two parts: the local training latency and the wireless uploading latency. Since all the local FL model parameters need to be uploaded to the BS to perform aggregation at each iteration, 
  the completion latency of one global iteration is calculated as
\begin{equation}\label{T}
\begin{aligned}
T &= \underset{m \in {\cal M}}{\rm max} \left \{t^m_{Gmp} +  t^m_{Com} \right \} \\
&= \underset{m \in {\cal M}}{\rm max} \left \{v {\rm log}(1/\theta) \frac{c_m s_m}{f_m} +  \frac{z(\bm{\omega}_m)}{{\beta}_m B {\rm log}_2 \!\left(\!1\!+\!  \gamma_m \right)} \right \} ,
\end{aligned}
\end{equation}
where $\gamma_m$ is given by (\ref{sinr}).

\subsection{Problem Formulation}
Having defined the latency model in the previous section, the problem is formulated to maximize the number of correctly received parameters of the local FL model per unit time at each iteration. Here, 
we define a metric for evaluating the FL performance at each iteration, called \textit{FL system utility}, which is calculated as the ratio between the total number of correctly received local FL model parameters at the BS and the total delay at the considered iteration, i.e., 
\begin{equation} \label{utility}
\begin{aligned} 
\xi & \triangleq \frac{\rm Q}{T} =\frac{\sum_{m=1}^{M} Q(\bm{\omega}_m)}{\underset{m \in {\cal M}}{\rm max} \left \{t^m_{Gmp} +  t^m_{Com} \right \}},
\end{aligned}
\end{equation}
where the binary variable  $Q(\bm{\omega}_m)=1$ indicate that the local FL model parameter is correctly received at the BS, otherwise, $Q(\bm{\omega}_m)=0$. 

In this paper, the objective is to maximize the FL system utility (${\bm  \Theta} = \{{\bm \theta}_1, {\bm \theta}_2,...,{\bm \theta}_K \}^{\rm T}$) by jointly optimizing the RIS allocation matrix (${\bm R}$), the wireless bandwidth activation vector (${\bm \beta}$), and the phase shifts matrix of the RISs whose size is ${K \times N}$\footnote{For simplicity, we assume that each RIS has the same  number of reflecting elements, i.e., $N = N_{k_1}=N_{k_2}$, where $k_1 \neq k_2$.}. 
To this end, a joint RIS phase shifts, user-RIS association and bandwidth allocation problem can be formulated as follows: 
\begin{align}
 \mathbb{P}: & \;\;\;\;\underset{\{{\bm R}, {\bm \Theta}, {\bm \beta}\}}{\rm max}\;\; \xi \label{problem}  \\ 
& \;\;\;\;\;{\rm{s}}{\rm{.t}}{\rm{.}}\;\; r_{m,k}\in \left \{0,1  \right \}, \  \forall m \in {\cal M},  \ \forall k \in {\cal K}, \tag{\ref{problem}{a}} \label{problema} \\
& \;\;\;\;\;\;\;\;\;\;\; \sum_{m =1}^{M} r_{m,k}\leq 1, \ \forall k \in {\cal K}, \tag{\ref{problem}{b}} \label{problemb}\\
& \;\;\;\;\;\;\;\;\;\;\;\;  \gamma_m^k \geq \gamma_T, \ \forall m \in {\cal M}, \tag{\ref{problem}{c}} \label{problemc}\\
& \;\;\;\;\;\;\;\;\;\;\;  \left |e^{j \theta_{n}^k} \right | = 1, \ \forall n \in [1, N_k], \ \forall k \in {\cal K}, \tag{\ref{problem}{d}} \label{problemd}  
\\
& \;\;\;\;\;\;\;\;\;\;\;  \sum_{m=1}^{M} \beta_m \leq 1. \tag{\ref{problem}{e}} \label{probleme}
 \end{align}

In (\ref{problem}a), the binary value $r_{m,k}=1$ indicates that $R_k$ is allocated to $U_m$, and $r_{m,k}=0$, otherwise.  (\ref{problem}b) indicates that at most one RIS can be allocated to a user at a time.  (\ref{problem}c) indicates that the achievable SNR  needs to be larger than a threshold. 
  (\ref{problem}d) indicates that each RIS reflecting element can only provide a phase shift $\theta_{n}^k \in [0, 2\pi)$ without amplifying the signals. Finally, (\ref{problem}e) indicates that the sum of the wireless bandwidths that are allocated to the users cannot exceed the total bandwidth.

\section{Joint Optimization of RISs Configurations and Wireless Bandwidth Allocation}\label{soving}

 We observe that the formulated problem in (\ref{problem}) is a mixed-integer nonlinear programming (MINLP), which is NP-hard and the globally optimal solution is, in general, difficult to obtain~\cite{tmc_yb}. Specifically, due to the limited number of RISs available in practice, only a subset of users may be allowed to upload, at each iteration, their local FL model parameters to the BS with the aid of RISs. Moreover, since the data samples of each user are usually non-independent and identically distributed (non-IID), the BS generally prefers to include more users' FL models to generate a converged FL model. Hence, the FL performance will be significantly affected by the user-RIS association and the wireless bandwidth allocation.                                                                                                                                                                                                                                                                                                                                                                                                                                                                                                                                                                                                                                                                                                                                                                                                                                                                                                                                                                                                                                                                                                                                                                                                                                                                                                                                                                                                                                                                                                                                                                                                                                                                                                                                                                                                                                                                                                                                                                                                                                                                                                                                                                                                                                                                                                                                                                                                                                                                                                                                                                                                                                                                                                                                                                                                                                                                                                                                                                                                                                                                                                                                                                                                                                                                                                                                                                                                                                                                                        
 
  In the following, we decompose the original problem into two subproblems, which are then solved independently. 
  \vspace{-4mm}
                                                                                                                                                                                                                                                                                                                                                                                                                                                                                                                                                                                                                                                                                                                                                                                                                                                                                                                                                                                                                                                                                                                                                                                                                                                                                                                                                                                                                                                                                                                                                                                                                                                                                                                                                                                                                                                                                                                                                                                                                                                                                                                                                                                                                                                                                                                                                                                                                                                                                                                                                                                                                                                                                                                                                                                                                                                                                                                                                                                                                                                                                                                                                                                                                                                                                                                                                                                                                                                                                                                                                                                                                                                                                                                  \subsection{Problem Decomposition and Transformation}
By exploiting the structure of the objective function and its constraints, we observe that the original problem in (\ref{problem}) has a high complexity. By using the Tammer decomposition method, we decompose the original problem into 
two subproblems with separated objective and constraints without changing the optimality of the solutions~\cite{Tammer}.  

First, we rewrite the original problem as an equivalent problem:
\begin{equation} \label{problem1}
\begin{aligned}
\widetilde{\mathbb{P}}:  & \ \underset{\bm \beta}{\rm max} \ \left(\underset{\{{\bm R}, {\bm \Theta}\}}{\rm max} \ \xi \right)   \\
&{\rm{s}}{\rm{.t}}{\rm{.}}\;\;\;(\ref{problem}a)-(\ref{problem}e).
\end{aligned}
\end{equation} 

To solve the equivalent problem $\widetilde{\mathbb{P}}$, we further decompose it into two subproblems, as illustrated in \textbf{Remark~\ref{R3}}.
\begin{remark} 
\label{R3}
Solving $\widetilde{\mathbb{P}}$ is equivalent to solving two subproblems: i) the user-RIS association subproblem while keeping fixed the bandwidth allocation vector so as to maximize the number of correctly received local FL models, i.e.,  \begin{equation} \label{RA}
\begin{aligned}
& \mathbb{P}_1: \ {\rm Q}^*=\underset{\{{\bm R}, {\bm \Theta}\}}{\rm max}\; {\rm Q}    \\
&{\rm{s}}{\rm{.t}}{\rm{.}}\;\;\;(\ref{problem}a)-(\ref{problem}d),
 \end{aligned}
\end{equation}
and, ii) the bandwidth allocation subproblem while assuming given the optimal user-RIS association to maximize the FL system utility, i.e., \begin{equation} \label{BA}
\begin{aligned}
& \mathbb{P}_2: \ \underset{\bm \beta}{\rm max}\;  \xi^*   \\
& {\rm{s}}{\rm{.t}}{\rm{.}}\;\;\;(\ref{problem}e),
 \end{aligned}
\end{equation}
where $\xi^* = {\rm Q^*}/{T}$.
 \end{remark}
 
Since ${\rm Q}^*$ in  $\mathbb{P}_2$ can be obtained by solving $\mathbb{P}_1$, $\mathbb{P}_2$ can be rewritten as the bandwidth allocation subproblem ($\widetilde{\mathbb{P}}_2$) to minimize the FL training latency:
 \begin{equation} \label{BA_new}
\begin{aligned}
& \widetilde{\mathbb{P}}_2: \ \underset{\bm \beta}{\rm min}\; {\rm max} \left \{t^m_{Gmp} +  t^m_{Com} \right \}   \\
& {\rm{s}}{\rm{.t}}{\rm{.}}\;\;\;(\ref{problem}e).
 \end{aligned}
\end{equation}

Before solving the subproblem $\mathbb{P}_1$ in (\ref{RA}),  \textbf{Observation~\ref{O1}} is presented as follows. 
\theoremstyle{Observation}
\newtheorem{observation}{\textbf{Observation}}
\begin{observation} \label{O1}
We observe from (\ref{RA}) that the optimal value of ${\bm \Theta}$ is the one that maximizes the channel gain via RISs, i.e., $\left |h_{d,m}+ \sum_{k=1}^{K} r_{m,k} \textit{\textbf g}_{k} \mathbf{\Phi}_k \textit{\textbf h}_{m,k}  \right |^2$. With this in mind, the optimal solution of $\mathbb{P}_1$ can be obtained in two steps: obtaining the optimal value of ${\bm \Theta}$ that maximizes the channel gain in the first-step, and then calculating the optimal user-RIS association matrix in the second-step. 
\end{observation}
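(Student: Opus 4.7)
The plan is to establish that the objective $\mathrm{Q}$ of $\mathbb{P}_1$ depends on the phase shifts $\bm{\Theta}$ only through the composite channel gains of the individual users, and that these gains decouple across the user--RIS pairs under constraints (\ref{problem}a)--(\ref{problem}b). Once this structural property is exposed, the two-step decomposition follows immediately: the inner maximization over $\bm{\Theta}$ is independent of, and can be performed prior to, the outer combinatorial optimization over $\bm{R}$.

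First, I would observe that $\mathrm{Q}=\sum_{m=1}^{M} Q(\bm{\omega}_m)$ and that each indicator $Q(\bm{\omega}_m)$ depends on $\bm{\Theta}$ only via the SNR $\gamma_m$ appearing in constraint (\ref{problem}c). From (\ref{sinr}), $\gamma_m$ is strictly monotonically increasing in the composite gain $\bigl|h_{d,m}+\sum_{k=1}^{K} r_{m,k}\,\textit{\textbf g}_{k}\mathbf{\Phi}_k\textit{\textbf h}_{m,k}\bigr|^2$ with every other quantity held fixed. Constraint (\ref{problem}b) further ensures that, in the sum over $k$, at most one term survives for each $m$, and each $\mathbf{\Phi}_k$ affects at most one user's gain. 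Consequently, for any admissible $\bm{R}$, maximizing $\mathrm{Q}$ reduces to simultaneously maximizing the per-pair gain $\bigl|h_{d,m}+\textit{\textbf g}_{k}\mathbf{\Phi}_k\textit{\textbf h}_{m,k}\bigr|^2$ for each active pair $(m,k)$ with $r_{m,k}=1$, an optimization whose variables do not interact across pairs.

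Second, for a generic pair $(m,k)$ I would invoke the classical co-phasing argument. Writing $\textit{\textbf g}_{k}\mathbf{\Phi}_k\textit{\textbf h}_{m,k}=\sum_{n=1}^{N_k} g_{k,n}h_{m,k,n}e^{j\theta_n^k}$ and applying the triangle inequality together with the unit-modulus constraint (\ref{problem}d) gives
\begin{equation}
\bigl|h_{d,m}+\textit{\textbf g}_{k}\mathbf{\Phi}_k\textit{\textbf h}_{m,k}\bigr|\;\le\;|h_{d,m}|+\sum_{n=1}^{N_k}|g_{k,n}|\,|h_{m,k,n}|,
\end{equation}
with equality attained by the aligned choice $\theta_n^k=\arg(h_{d,m})-\arg(g_{k,n})-\arg(h_{m,k,n})$. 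This produces a closed-form maximum gain $G^{\max}_{m,k}$ that is a function of the channels alone, so the first step of the decomposition—precomputing the optimal phase configurations for every candidate pair—can be carried out without reference to $\bm{R}$.

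Third, substituting these precomputed $G^{\max}_{m,k}$ and the induced maximum SNRs $\gamma^{\max}_{m,k}$ back into $\mathbb{P}_1$ leaves a purely discrete assignment problem: choose a binary matrix $\bm{R}$ satisfying (\ref{problem}a)--(\ref{problem}b) that maximizes the number of users whose $\gamma^{\max}_{m,k}$ exceeds $\gamma_T$. This completes the two-step reduction. The main subtle point I anticipate is ensuring that the phase optimization truly decouples in the presence of the shared direct-link term $h_{d,m}$; however, since $h_{d,m}$ is a constant independent of both $\bm{\Theta}$ and $\bm{R}$, it serves only as the phase reference inside a single pair's gain and introduces no cross-coupling, so the decomposition is clean.
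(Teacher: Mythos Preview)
Your proposal is correct and follows essentially the same route as the paper: the paper states this as an observation without formal proof and then implements the two steps directly, using precisely the triangle-inequality/co-phasing argument you give (your equation matches the paper's (\ref{RA_2})--(\ref{problem1b})) to obtain the optimal $\bm{\Theta}$ for each pair before handling the discrete association over $\bm{R}$. Your explicit appeal to the decoupling induced by constraint (\ref{problem}b) and the monotonicity of $\gamma_m$ in the composite gain simply makes rigorous what the paper leaves implicit.
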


\subsection{First-step: RISs Phase Shift Configuration}

We first optimize the phase shift vector ${\bm \Theta}$ of problem $\mathbb{P}_1$ in (\ref{RA}), where $r_{m,k}=1$. Accordingly, the optimal value of ${\bm \Theta}$ can be calculated by solving the
following problem:
\begin{equation} \label{RA_1}
\begin{aligned}
&  \underset{ {\bm \Theta}}{\rm max}\; {\left |h_{d,m}+  \textit{\textbf g}_{k} \mathbf{\Phi}_k \textit{\textbf h}_{m,k}  \right |^2}     \\
&{\rm{s}}{\rm{.t}}{\rm{.}}\;\;\; (\ref{problem}d).
 \end{aligned}
\end{equation}

According to the triangle inequality, we have 
\begin{equation} \label{RA_2}
{\left |h_{d,m}+ \textit{\textbf g}_{k} \mathbf{\Phi}_k \textit{\textbf h}_{m,k}  \right |} \leq {\left |h_{d,m} \right |}  + {\left | \textit{\textbf g}_{k} \mathbf{\Phi}_k \textit{\textbf h}_{m,k}  \right |},
\end{equation}
 where the equality holds when ${\rm arg} \left(h_{d,m} \right) = {\rm arg} \left(\textit{\textbf g}_{k} \mathbf{\Phi}_k \textit{\textbf h}_{m,k}  \right)$ is fulfilled\footnote{This suggests that the phase of the signal reflected through the user-RIS-BS links is aligned with that of the user-BS direct link.}. 
 
   We let ${\bm u}_k {\bm v}_k \triangleq \textit{\textbf g}_{k} \mathbf{\Phi}_k \textit{\textbf h}_{m,k}$, where,
\begin{equation}
 {\bm u}_k = \left [e^{j \theta_1^k}, e^{j \theta_2^k},...,e^{j \theta_{N_k}^k} \right ]  \in \mathbb{C}^{1 \times N_k}, \notag
\end{equation} 
\begin{equation}
{\bm v}_k = {\rm diag} (\textit{\textbf g}_{k}) \textit{\textbf h}_{m,k} \in \mathbb{C}^{N_k \times 1}. \notag
\end{equation}
 Then the problem in (\ref{RA_1}) is equivalent to:
\begin{align} 
& \;\;\;\;\;\;\;\; \underset{ {\bm u}_k}{\rm max}\; {\left | {\bm u}_k {\bm v}_k \right |^2}  \label{problem1}    \\
&{\rm{s}}{\rm{.t}}{\rm{.}}\;\;\; \left |{u}_k^n \right | = 1, \ \forall n \in [1, N_k], \ \forall k \in {\cal K}, \tag{\ref{problem1}{a}} \label{problem1a} \\
&\;\;\;\;\;\;\;\; {\rm arg} \left({\bm u}_k {\bm v}_k \right) = {\rm arg} \left(h_{d,m} \right), \ \forall m \in {\cal M}, \forall k \in {\cal K}. \tag{\ref{problem1}{b}} \label{problem1b}
\end{align}

We observe that the optimal solution to the problem in (\ref{problem1}) is ${\bm u}_k^*=e^{j({\rm arg} \left(h_{d,m} \right)-{\rm arg} \left({\rm diag} (\textit{\textbf g}_{k}) \textit{\textbf h}_{m,k}  \right))}$. Therefore, the $n$th phase shift of the $k$th RIS is calculated as
\begin{equation}
\theta_n^{k*} = {\rm arg} \left(h_{d,m} \right)-{\rm arg} \left(\textit{g}_{k}\right) - {\rm arg} \left( \textit{h}_{m,k} \right),
\end{equation}
where $\textit{g}^n_{k}$ and $\textit{h}_{m,k}^n$ are the $n$th element of $\textit{\textbf g}_{k}$ and $\textit{\textbf h}_{m,k}$, respectively.

As a result, we have ${\bm \theta}_k^*=\left (\theta_1^{k*}, \theta_2^{k*},..., \theta_{N_k}^{k*} \right )$. The optimal phase shifts matrix of the RISs is obtained as ${\bm  \Theta^*} = \{{\bm \theta}_1^*, {\bm \theta}_2^*,...,{\bm \theta}_K^* \}^{\rm T}$.
Having the optimal RISs phase shifts, we move to the second-step to compute the optimal user-RIS association via matching theory.
\vspace{-2mm}
\subsection{Second-step: User-RIS Association}

Via the aggregated global FL model, each RIS controller infers the set of requested users at each iteration and then sends the estimated result to the BS. Based on the fusion result, the user-RIS association can be optimized.
To tackle the combinatorial problem, we apply matching theory to map the user-RIS association subproblem into a matching game. 

Specifically, we denote the set of the requested users whose signals over the direct links cannot be correctly received as ${\cal M}'$, where ${\cal M}' \subseteq \cal M$. Denote the total number of the requested users within ${\cal M}'$ as $M'$. We solve the user-RIS association subproblem in two cases: $M' \leq K$ (case-1) and $M' > K$ (case-2). For these two cases, we first formulate the user-RIS association as a two-sided matching game, followed by the definition of the utility. Then, we present a multi-granularity based matching algorithm to achieve a stable matching.
 

\noindent{{\textit{\textbf{Case-1:} User-RIS Association via One-to-One Matching}}}

In Case-1, the number of RISs is no smaller than the number of users in ${\cal M}'$, i.e., $K \geq {M}'$. In this case, each user can be served by one RIS at a time.
\subsubsection{Matching Game Formulation}
In this multiple-RISs matching game, we consider that each user can only be associated with one RIS and each RIS can only serve one user at a time, i.e., the constraint (\ref{problem}b). Thus, this matching game is a {one-to-one matching} for user-RIS association. Since the goal of matching theory is to optimally match elements of two different sets, in our case the users set ${\cal M}$ and the RISs set ${\cal K}$, by taking into account their individual preferences, as illustrated in \textbf{Definition~\ref{D1}}.
\theoremstyle{Definition}
\newtheorem{definition}{\textbf{Definition}}
\begin{definition} \label{D1}
Let ${\cal M}$ and ${\cal K}$ be two sets of players, the matching game is given by the tuple $({\cal M}, {\cal K}, \succ_{\cal M}, \succ_{\cal K})$. Here, $\succ_{\cal M} \triangleq \{\succ_m\}_{m \in \cal M}$ and $\succ_{\cal K} \triangleq \{\succ_k\}_{k \in \cal K}$ are defined as the set of preference relations of users and RISs, respectively. 
\end{definition}

More in depth, a matching game produces a matching function $\mu$, which is defined in \textbf{Definition~\ref{D2}}.
\begin{definition} \label{D2}
A matching function $\mu$ is defined by a function from the
set ${\cal M} \cup  {\cal K}$ into the set of elements of ${\cal M} \cup  {\cal K}$ such that $m=\mu(k)$ if and only if $k=\mu(m)$.  
\end{definition}

From \textbf{Definition~\ref{D2}}, it is observed that the matching function $\mu(\cdot)$ defines a relation from a given user of the set $\cal M$ to a given RIS of the set $\cal K$ on the basis of preference relations.
It is noteworthy that the preference relations (e.g., $\succ_{\cal M}, \succ_{\cal K}$) denote the level of satisfaction of the player of one set (e.g., the user set $\cal M$) in being matched with the player of the other set (e.g., the RISs set $\cal K$) and vice versa. 
\vspace{0.5mm}
\subsubsection{Preference Lists of Users and RISs}
In the proposed game, the matching is performed by the set of users and the set of RISs using preference lists. For each player, the preference list is used to rank the players of the other set. Generally, the preferences between players belonging to the two sets are formed on the basis of the evaluation of preference functions, as defined as below.

\begin{definition} \label{D3}
Let ${\cal U}_m(k)$ and ${\cal U}_k(m)$ be the preference functions of the user $m$ and the RIS $k$, respectively. We write ${\cal U}_m(k_1) > {\cal U}_m(k_2)$ if the user $m$ prefers the RIS $k_1$ to the RIS $k_2$, and thus this situation can be given by $k_1 \succ_m k_2$. Similarly, $m_1 \succ_k m_2$ indicates that the RIS $k$ prefers the user $m_1$ to the user $m_2$, and thus ${\cal U}_k(m_1) > {\cal U}_k(m_2)$ holds.
\end{definition}
 In the following, we describe in detail the preference functions ${\cal U}_m(k)$ and ${\cal U}_k(m)$, respectively.
\begin{itemize}
\item \textit{Preference function of user $m$}: 
The preference function of user $m$,  ${\cal U}_m(k)$, is evaluated by considering the achievable channel gains of the communication links between the user $m$ and the RIS $k$ with $N_k$ elements, i.e., 
\begin{equation} \label{PF1}
{\cal U}_m(k) = \sum_{n =1}^{N_k} |h^n_{m,k}|, \ \forall m \in {\cal M}, \forall k \in {\cal K}.
\end{equation}

The intuition for this preference function comes from the objective of the users, i.e., maximization of their achievable gain via the RIS $k, \forall k \in \cal K$. Hence, based on (\ref{PF1}), the generic user $m$ ranks all the RISs in a descending order of the expected SNR, so as to construct its preference list represented by ${\cal L}_m$. Therefore, an RIS $k_1 \in \cal K$ that achieves a higher preference value (consequently the SNR achieved through the more preferred RIS is higher) based on (\ref{PF1}) will be
preferred over an RIS $k_2\in \cal K$ by the user $m$, i.e., $k_1 \succ_m k_2$. We note that channel gain between the user $m$ and the RIS $k$ (i.e., ${\bm h}_{m,k}$) can be estimated by the BS and sent back to the user. 

\item \textit{Preference function of RIS $k$}: Likewise, the generic RIS $k$ also needs to generate a preference list that ranks the users according to its preference function, i.e.,
\begin{equation} \label{PF2}
{\cal U}_k(m) \!=\! {\rm max} \left(\gamma_m^k - \gamma_T, \ 0\right), \ \forall m \in {\cal M}', \forall k \in {\cal K},
\end{equation} 
where ${\cal M}'$ is selected from the users in ${\cal M}$ with ${\gamma_m^d < \gamma_T}$, so ${\cal M}' \subseteq \cal M$ holds. $\gamma_m^d= \frac{{p_m} \left |h_{d,m} \right |^2 }{{\beta}_m B {\cal N}_0}$ is the SNR of the user $m$ via the direct link.
According to this preference function, the RIS $k$ only gives preference to the users that belong to $\cal M'$ and gives more
preference to a user $m \in \cal M'$, that results in a larger improvement of the SNR. Additionally, the users that violate (\ref{problem}d) receive a `zero' preference value and thus are ranked at the bottom of the preference list. The users whose SNR via the RIS $k$ that do not exceed the SNR threshold $\gamma_T$ will not be preferred by the RIS $k$. By doing this, the constructed preference list of the RIS $k$ represented by ${\cal L}_k$.
\end{itemize}


\subsubsection{The Proposed User-RIS Association Algorithm}
Based on these considerations, we present the user-RIS association algorithm based on the proposed one-to-one matching game, in order to find a stable matching association, which is defined as follows.
\begin{definition} \label{D4}
A matching function $\mu$ is stable if there exists no blocking
pair $(m',k')$, where $m' \in {\cal M}$, $k' \in {\cal K}$, such that $k' \succ_m \mu(m)$ and $m' \succ_k \mu(k)$, where $\mu(m)$ and $\mu(k)$ represent the current matched players of $m$ and $k$, respectively.
\end{definition}

According to \textbf{Definition~\ref{D4}}, we note that a stable solution of the proposed matching game ensures that no matched
user would benefit from modifying the assigned RIS $k$ with a new RIS $k'$. The output of the proposed user-RIS association algorithm is the RISs allocation matrix $\bm R$ that maximizes the
objective of the optimization problem $\mathbb{P}_1$. The pseudocode is given in \textbf{Algorithm~\ref{alg1}}, which is
guaranteed to converge to a stable matching via the well-known deferred acceptance algorithm. 

The proposed algorithm is performed to ensure as the final result that the users are associated with the appropriate RISs. The matching procedure begins after the completion of the initialization, including the wireless channel estimation to achieve the CSI among the users, the RISs, and the BS. The estimated CSI can be used to calculate the preference function via (\ref{PF1}) and (\ref{PF2}), respectively. So each user that is  interested in the user-RIS association can estimate the achievable SNR toward each possible RIS.

\renewcommand{\algorithmicrequire}{\textbf{Initialization:}} 
\renewcommand{\algorithmicensure}{\textbf{Output:}} 
\begin{algorithm}[t]    
\caption{User-RIS Association Algorithm for One-to-one Matching}             
\footnotesize
\label{alg1}                  
\begin{algorithmic}[1]             
\REQUIRE $p_m$, $h_{d,m}$, ${\bm h}_{m,k}$, ${\bm g}_k$, $\mathbf{\Phi}_k$, $\gamma_T$, ${\cal N}_0$,  
 $t=1$, $\mu^{(1)}\triangleq\{\mu(k)^{(1)}, \mu(m)^{(1)}\}_{k\in {\cal K}, m\in {\cal M}}=\O$;
\STATE  \textbf{for} user $m\in \cal M$, constructs its preference list on all RISs according to (\ref{PF1}), denoted as ${\cal L}_m$;
\STATE  \textbf{end for}
\STATE  \textbf{for} RIS $k\in \cal K$, constructs its preference list on all users according to (\ref{PF2}), denoted as ${\cal L}_k$;
\STATE  \textbf{end for}
\STATE \textbf{repeat:}
\STATE \ \ $t\leftarrow t+1$;
\STATE \ \ \textbf{for} $m\in \cal M$, proposes $k$ according to ${\cal L}_m$ \textbf{do}
\STATE \ \ \ \ \  \textbf{if} {$\gamma_m^k > \gamma_T$} \textbf{then}
\STATE \ \ \ \ \ \ \ \  RIS $k$ checks its preference list ${\cal L}_k$;
\STATE \ \ \ \ \ \ \ \ \textbf{if} $m \succ_k \mu(k)^{(t)}$ \textbf{then}
\STATE \ \ \ \ \ \ \ \ \ \ \ $\mu(k)^{(t)} \leftarrow m$;
\STATE \ \ \ \ \ \ \ \ \textbf{else} $m$ is rejected;
\STATE \ \ \ \ \  \textbf{else} $m$ is rejected;
\STATE \ \ \textbf{end for}
\STATE  \textbf{until}  $\mu^{(t)} = \mu^{(t-1)}$.
\ENSURE $\mu^{(t)}$.\\  
\end{algorithmic}
\end{algorithm}

During the one-to-one matching, each unassigned user $m$ and RIS $k$ constructs their preference lists according to (\ref{PF1}) and (\ref{PF2}), respectively (\textit{lines 1-4} in \textbf{Algorithm~\ref{alg1}}). As illustrated in \textit{lines 7-14} of \textbf{Algorithm~\ref{alg1}}, the user $m$ proposes its most preferred RIS according to ${\cal L}_m$. If (\ref{problem}d) is violated, the user $m$ is rejected. Otherwise, the RIS $k$ checks its preference list ${\cal L}_k$. If ranked higher than the current match, i.e., $m \succ_k \mu(k)$, the user $m$ will be accepted. Otherwise, it will be rejected.  This one-to-one matching process is carried out iteratively until a stable matching function $\mu$ is found between both sets of users and RISs. The matching algorithm will converge when the matching of two consecutive iterations remains unchanged, i.e., $\mu^{(t)} = \mu^{(t-1)}$.

\vspace{1mm}
\noindent \textit{{{\textbf{Case-2:} User-RIS Association via One-to-Many Matching}}}

In Case-2, the number of RISs is smaller than the number of users in ${M}'$, i.e., $K < {\cal M}'$, so the elements of each RIS can be divided into multiple groups~\cite{group}. In this case, multiple users are allowed to reuse one RIS in such a manner that the association is no longer constrained by (\ref{problem}b), while does not violate the SNR constraint in (\ref{problem}c)\footnote{Since the elements within one group are associated with a certain user with an optimized phase shift and reflection amplitude, the impact of signal reflection via other element group mainly depends on the angle of incidence~\cite{YB_TVT}, which is ignored in this paper for simplicity.}.  
 Thus, this matching game is considered as a {one-to-many matching} with externalities for the user-RIS association.

The preference function of the user $m$, i.e., ${\cal U}_m(k)$, is defined as
 \begin{equation} \label{PF_case2}
{\cal U}_m(k) = \sum_{n =1}^{N_k^R} |h^n_{m,k}|, \ \forall m \in {\cal M}, \forall k \in {\cal K},
\end{equation}
where $N_k^R=N_k-N_k^O$ indicates the number of unoccupied elements of the RIS $k$, and $N_k^O$ indicates the number of elements of the RIS $k$ that have been occupied already. 

We denote the number of users associated with the RIS $k$ as $M_k$, and we let the total number of elements of the RIS $k$ occupied by the associated user $m$ be $n_{m,k}$, which can be calculated by solving the equation $\gamma_m^k = \gamma_T$. So the number of elements of the RIS $k$ that are already occupied by the users is
$N_k^O = \sum_{m=1}^{M_k} n_{m,k}, \  \forall k \in {\cal K}$.
From $N_k^O$,  $N_k^R$ can be obtained accordingly.
 
The preference function of the RIS $k$, i.e., ${\cal U}_k(m)$, is defined as  
\begin{equation} \label{PF3}
{\cal U}_k(m) \!=\! \underset{i}{\rm max}\left \{|{\cal M}_i'|: \gamma_{{\cal M}_i'} \geq \gamma_T  \right \}, \ \forall k \in {\cal K}.
\end{equation}

\begin{algorithm}[t]    
\caption{User-RIS Association Algorithm for One-to-many Matching}             
\footnotesize
\label{alg2}                  
\begin{algorithmic}[1]             
\REQUIRE $p_m$, $h_{d,m}$, ${\bm h}_{m,k}$, ${\bm g}_k$, $\mathbf{\Phi}_k$, $n_{m,k}$, $\gamma_T$, ${\cal N}_0$, 
 $t=1$, $\mu^{(1)}\triangleq\{\mu(k)^{(1)}, \mu(m)^{(1)}\}_{k\in {\cal K}, m\in {\cal M}}=\O$;
\STATE \textbf{for} user $m\in \cal M$, constructs its preference list on all RISs according to (\ref{PF_case2}), denoted as ${\cal L}_m$; 
\STATE  \textbf{end for}
\STATE  \textbf{for} RIS $k\in \cal K$, constructs its preference list on all users according to (\ref{PF3}), denoted as ${\cal L}_k$;
\STATE  \textbf{end for}
\STATE \textbf{repeat:}
\STATE \ \ $t\leftarrow t+1$;
\STATE \ \ \textbf{for} $m\in \cal M$, proposes $k$ according to ${\cal L}_m$ \textbf{do}
\STATE \ \ \ \ \  \textbf{if} {$\gamma_m^k > \gamma_T$ } \textbf{then}
\STATE \ \ \ \ \ \ \ \  RIS $k$ checks the number of unoccupied elements, $N_k^R$;
\STATE \ \ \ \ \ \ \ \  \textbf{if} {$N_k^R \geq  n_{m,k}$ } \textbf{then}
\STATE \ \ \ \ \ \ \ \ \ \ \ RIS $k$ accepts the proposal $m$;
\STATE \ \ \ \ \ \ \ \  \textbf{else} {RIS $k$ checks its preference list ${\cal L}_k$;}
\STATE \ \ \ \ \ \ \ \ \ \ \  \textbf{if} $m \succ_k \mu(k)^{(t)}$ \textbf{then}
\STATE \ \ \ \ \ \ \ \ \ \ \ \ \ \ $\mu(k)^{(t)} \leftarrow m$;
\STATE \ \ \ \ \ \ \ \ \ \ \ \textbf{else} $m$ is rejected;
\STATE \ \ \ \ \  \textbf{else} $m$ is rejected;
\STATE \ \ \textbf{end for}
\STATE \ \ $\cal M \leftarrow \cal M_r$, where ${\cal M_r}$ is the set including the users that are not associated with RIS yet;
\STATE \ \  \textbf{for} user $m\in \cal M$, updates its preference list on all RISs according to (\ref{PF_case2}); 
\STATE  \ \ \textbf{end for}
\STATE  \textbf{until}  $\mu^{(t)} = \mu^{(t-1)}$.
\ENSURE $\mu^{(t)}$.\\  
\end{algorithmic}
\end{algorithm}

According to (\ref{PF3}), each RIS chooses a subset of users ${\cal M}_i'$, ${\cal M}_i'  \subseteq {\cal M}'$, such that the SNR of each user in the subset ${\cal M}_i'$ is no smaller than a tolerable SNR threshold $\gamma_T$. The preference function defined in (\ref{PF3}) aims to maximize the number of elements included in the subset ${\cal M}_i'$, i.e., $|{\cal M}_i'|$. This allows the users that achieve the highest SNR to be preferred by the RIS $k$. Therefore, the subset with the largest number of elements is the most preferred among all the feasible subsets and is ranked accordingly.
The user-RIS association algorithm to find a stable matching association for the proposed one-to-many matching game is illustrated in \textbf{Algorithm~\ref{alg2}}.
Different from the previous one-to-one matching, in the proposed one-to-many matching problem, we have to take into account that the preference list of each user is dependent on the others users' preferences, as highlighted in \textbf{Remark~\ref{R2}}.

\begin{remark} \label{R2}
Due to the externalities in the proposed one-to-many matching problem, the user's preference of choosing an RIS in (\ref{PF_case2}) is mutually influenced by the number of RIS elements already utilized by other users. As a consequence, the preferences list of each user should be updated upon each association, as shown in lines 18-20 in \textbf{Algorithm~\ref{alg2}}.
\end{remark}

 

Then, we have obtained the phase shifts matrix of the RISs (denoted as ${\bm  \Theta^*}$) and the user-RIS association matrix (denoted as ${\bm R}^*$), from which ${\rm Q^*}$ can be calculated accordingly. Next, we solve $\widetilde{\mathbb{P}}_2$ to minimize the FL latency.

 \vspace{-4mm}
\subsection{Bandwidth Allocation for FL Latency Minimization}
We solve the subproblem $\widetilde{\mathbb{P}}_2$ to minimize the FL latency, as described in {\textbf{Theorem~\ref{Tm1}}}.

\begin{theorem}
\label{Tm1}
The solution to $\widetilde{\mathbb{P}}_2$ is as follows:
\begin{equation}\label{solution}
\begin{aligned}
{\beta}_m^* & = {\rm arg \underset{\bm \beta}{\rm min}} \ T \\
& = \frac{z(\bm{\omega}_m)}{ B {\rm log}_2 \left(1\!+\! \gamma_m \right) \left(T^*-v {\rm log}(1/\theta) \frac{c_m s_m}{f_m}\right)},  
\end{aligned}
\end{equation}
where $\gamma_m$ is defined in (\ref{sinr}), and $T^*$ denotes the minimal completion latency that satisfies the following conditions:
\begin{equation}\label{condition1}
\sum_{m=1}^{M} \frac{z(\bm{\omega}_m)}{ B {\rm log}_2 \left(1\!+\! \gamma_m \right) \left(T^*-v {\rm log}(1/\theta) \frac{c_m s_m}{f_m}\right)} \leq  1.
\end{equation}
\end{theorem}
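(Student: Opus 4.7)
The plan is to reformulate the min-max problem into a standard epigraph form and then exploit the monotonicity of each per-user latency in the allocated bandwidth. First, I would introduce an auxiliary variable $T$ and rewrite $\widetilde{\mathbb{P}}_2$ as $\min T$ subject to $t^m_{Gmp} + t^m_{Com}(\beta_m) \le T$ for every $m \in {\cal M}$ together with $\sum_m \beta_m \le 1$ and $\beta_m > 0$. Because $t^m_{Com}(\beta_m) = z(\bm{\omega}_m)/(\beta_m B\log_2(1+\gamma_m))$ is strictly decreasing and continuous in $\beta_m$, while $t^m_{Gmp}$ does not depend on $\beta_m$, each per-user constraint can be inverted to express $\beta_m$ explicitly as a function of $T$, provided $T > v\log(1/\theta)(c_m s_m/f_m)$.

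Next I would establish a latency-balancing lemma: at any optimum, each constraint $t^m_{Gmp} + t^m_{Com} \le T^*$ holds with equality, and the bandwidth constraint $\sum_m \beta_m \le 1$ is tight. The argument is a standard exchange argument. Suppose some user $m'$ has strict slack; by continuity I can slightly decrease $\beta_{m'}$, which strictly increases $t^{m'}_{Com}$ but keeps it below $T^*$, and reassign the freed bandwidth to any user (or users) attaining the maximum $T^*$, strictly reducing its latency and hence the objective — contradicting optimality. An analogous argument shows the bandwidth budget must bind, for otherwise any saturated user's bandwidth could be increased.

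Inverting the active equality
\begin{equation*}
v\log(1/\theta)\frac{c_m s_m}{f_m} + \frac{z(\bm{\omega}_m)}{\beta_m B\log_2(1+\gamma_m)} = T^*
\end{equation*}
for each $m$ yields (\ref{solution}) directly; substituting this into the active bandwidth constraint $\sum_m \beta_m^* = 1$ produces the scalar equation that determines $T^*$, and the stated inequality (\ref{condition1}) expresses the associated feasibility condition. Because the right-hand side of (\ref{solution}), viewed as a function of $T$, is strictly decreasing and continuous on $(v\log(1/\theta)\max_m c_m s_m/f_m,\infty)$, the map $T \mapsto \sum_m \beta_m(T)$ is strictly decreasing, so $T^*$ is unique and can be located by the bisection search already anticipated in the introduction.

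I expect the main obstacle to be rigorously justifying the balancing lemma, especially when several users simultaneously achieve the maximum $T^*$. In that case a single-user perturbation cannot lower the objective on its own, and one has to argue that the freed bandwidth from any slack user can be distributed across all saturated users at rates dictated by the local sensitivities $\partial t^m_{Com}/\partial \beta_m$, strictly reducing every maximum coordinate. A cleaner alternative, which I would fall back on if the exchange argument becomes cumbersome, is to verify the KKT conditions of the convex reformulation: since $t^m_{Com}$ is convex in $\beta_m$, the epigraph problem is convex, strong duality holds, and the KKT stationarity conditions immediately give positive multipliers on every per-user constraint, reproducing the balancing identity.
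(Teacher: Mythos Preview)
Your proposal is correct and follows essentially the same approach as the paper: both rest on the latency-balancing principle that, at the optimum, every user finishes at the common time $T^*$, after which one inverts the per-user equality to obtain $\beta_m^*$ and determines $T^*$ from the bandwidth budget via bisection. Your treatment is in fact more rigorous---the paper simply asserts that ``the optimal bandwidth allocation vector $\bm\beta$ can be achieved only when all the users finish \ldots\ at the same time'' without providing the exchange argument or the KKT alternative you outline.
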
 

\begin{proof}

In the proposed RIS-assisted FL structure, if the training and uploading of the local FL model of some users is slower than others, the BS will allocate more wireless bandwidth to the slower users to accelerate the FL procedure. As a result, the FL system completion latency can be shortened. 
Therefore, the optimal bandwidth allocation vector $\bm \beta$ can be achieved only when all the users finish the training of the local FL model and the wireless uploading at the same time (denoted as $T^*$), i.e., \begin{equation}\label{T*}
\begin{aligned}
T &= \underset{m \in {\cal M}}{\rm max} \left \{v {\rm log}(1/\theta) \frac{c_m s_m}{f_m} +  \frac{z(\bm{\omega}_m)}{{\beta}_m B {\rm log}_2 \!\left(\!1\!+\!  \gamma_m \right)} \right \} \\
& \triangleq T^*.
\end{aligned}
\end{equation}

Substituting $\beta_m = \beta_m^*$ into (\ref{T*}), we have 
\begin{equation} \label{problem_new}
v {\rm log}(1/\theta) \frac{c_m s_m}{f_m} +  \frac{z(\bm{\omega}_m)}{{\beta}_m^* B {\rm log}_2 \!\left(\!1\!+\!  \gamma_m \right)} = T^*.  
\end{equation}

By solving (\ref{problem_new}), $\beta_m^*$ can be derived as (\ref{solution}). Note that $T^*$ fulfills (\ref{condition1}) and can be obtained using the bisection search, e.g., through interval halving or binary search~\cite{binary_search}.
As a result, $\beta_m^*$ can be calculated accordingly.
\end{proof}

In the proposed FSL framework, the deployment of RISs has a twofold contribution: 1) the RIS controllers help detecting the set of requested users in the resource allocation stage,  and 2) the RISs can improve the data transmissions in the FL training stage. Based on this, the performance improvement given by deployment of RISs is presented in \textbf{Theorem~\ref{Tm0}}.

\begin{theorem}
\label{Tm0}
For a considered iteration in the proposed FSL framework, let $\epsilon$ be the prediction accuracy of the FL model, and let $K$ be the number of RISs. The performance improvement brought by RISs in the first stage is given by $\chi_1=\frac{1-(1-\epsilon)^K}{\epsilon}$, and the improvement brought by RISs in the second stage is 
$\chi_2=e^{\kappa}$, where 
\begin{equation} \nonumber
 \begin{aligned}
\kappa \triangleq\frac{{{p_m}} }{{\beta}_m B {\cal N}_0} \left |\sum_{k=1}^{K} r_{m,k} \textit{\textbf g}_{k} \mathbf{\Phi}_k \textit{\textbf h}_{m,k}  \right |^2 \\
\;\;\; +   \frac{2\sqrt{\gamma_m^d p_m} }{\sqrt{{\beta}_m B {\cal N}_0}} \left |\sum_{k=1}^{K} r_{m,k} \textit{\textbf g}_{k} \mathbf{\Phi}_k \textit{\textbf h}_{m,k}  \right |,
  \end{aligned}
\end{equation}
and $\gamma_m^{d}=\frac{{p_m} \left |h_{d,m} \right |^2 }{{\beta}_m B {\cal N}_0}$.
\end{theorem}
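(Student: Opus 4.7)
The plan splits naturally into two independent parts, one for each stage where the RISs provide a gain.

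For the first-stage gain $\chi_1$, I would model cooperative spectrum sensing at the $K$ RIS controllers as independent Bernoulli trials, each with success probability $\epsilon$ (the per-RIS CNN prediction accuracy). The baseline (no RIS assistance) offers only a single inference attempt with success probability $\epsilon$, whereas in the cooperative setup the BS fuses the outputs of all $K$ RISs and correctly identifies a requested user whenever at least one RIS succeeds. By independence, the cooperative success probability is $1-(1-\epsilon)^{K}$, and taking the ratio over the baseline yields $\chi_1 = \frac{1-(1-\epsilon)^{K}}{\epsilon}$.

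For the second-stage gain $\chi_2$, my plan is first to derive the SNR increment induced by the RISs and then to translate it into an exponential reliability gain. Starting from the received SNR in~(\ref{sinr}) and aligning phases optimally as in Observation~\ref{O1} and~(\ref{RA_2}), the composite channel gain satisfies
\begin{equation} \nonumber
\left|h_{d,m}+\sum_{k=1}^{K} r_{m,k}\, \textit{\textbf g}_{k} \mathbf{\Phi}_k \textit{\textbf h}_{m,k}\right|^{2} = |h_{d,m}|^{2} + \left|\sum_{k=1}^{K} r_{m,k}\, \textit{\textbf g}_{k} \mathbf{\Phi}_k \textit{\textbf h}_{m,k}\right|^{2} + 2|h_{d,m}|\left|\sum_{k=1}^{K} r_{m,k}\, \textit{\textbf g}_{k} \mathbf{\Phi}_k \textit{\textbf h}_{m,k}\right|.
\end{equation}
Multiplying by $p_m/(\beta_m B {\cal N}_0)$ and substituting $|h_{d,m}|=\sqrt{\gamma_m^{d}\beta_m B {\cal N}_0/p_m}$ into the cross term produces exactly the identity $\gamma_m = \gamma_m^{d} + \kappa$, which is the structural heart of the second claim.

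To finish, I would feed this SNR increment into an exponential-form reliability surrogate consistent with the model for $Q(\bm{\omega}_m)$ in~(\ref{Q}): the probability of correct decoding above the SNR threshold admits a Chernoff-type expression whose ratio between the RIS-aided and direct-link cases telescopes to $e^{\gamma_m}/e^{\gamma_m^{d}} = e^{\gamma_m - \gamma_m^{d}} = e^{\kappa}$. The main obstacle is not the algebra, which reduces to a triangle-inequality expansion, but pinning down the particular exponential reliability mapping that justifies $\chi_2 = e^{\kappa}$; I would expect the argument to invoke either a high-SNR outage approximation in fading channels or a moment-generating-function bound, after which the telescoping ratio drops out cleanly.
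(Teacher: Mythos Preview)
Your treatment of $\chi_1$ coincides with the paper's: cooperative inference at $K$ RIS controllers is modeled as $K$ independent detectors, each succeeding with probability $\epsilon$, yielding fused accuracy $1-(1-\epsilon)^K$ and hence $\chi_1=\frac{1-(1-\epsilon)^K}{\epsilon}$.

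For $\chi_2$, your SNR decomposition $\gamma_m=\gamma_m^d+\kappa$ under optimal phase alignment is exactly the paper's triangle-inequality step (the paper writes it as an inequality, with equality at the optimized phases). The gap is in your last paragraph. Your proposed mechanism---a ``Chernoff-type'' reliability map giving the ratio $e^{\gamma_m}/e^{\gamma_m^d}$---treats both SNRs as deterministic arguments of an exponential function, and a probability scaling like $e^{\gamma}$ is not a valid probability. The paper's argument is different and more concrete: it models the direct-link SNR $\gamma_m^d$ as a \emph{random} variable with an exponential distribution (Rayleigh fading on $h_{d,m}$), so the baseline correct-reception probability is $p_c=\Pr\{\gamma_m^d\ge\gamma_T\}=e^{-\gamma_T}$. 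The RIS contribution $\kappa$ is then a deterministic boost, so the RIS-aided event $\{\gamma_m\ge\gamma_T\}$ is contained in $\{\gamma_m^d\ge\gamma_T-\kappa\}$, whose probability is $e^{-(\gamma_T-\kappa)}$. Taking the ratio gives $\chi_2=p/p_c\le e^{\kappa}$. In short, the exponential is not a Chernoff/MGF bound on a deterministic SNR; it is the complementary CDF of an exponentially distributed $\gamma_m^d$, with $\kappa$ acting as a \emph{threshold reduction} rather than as an argument of the reliability function.
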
 

\begin{proof}
Please refer to Appendix~A.
\end{proof}

From \textbf{Theorem~\ref{Tm0}}, we observe that the  performance improvement brought by RISs in the proposed FSL framework  is proportional to the number of RISs.\vspace{-5mm}

\textcolor{black}{
\subsection{Feasibility Analysis} 
In this section, we demonstrate the feasibility of the proposed scheme by clarifying the RIS elements design, analyzing the signaling overhead, storage, as well as quantifying the computation complexity at the RIS controller.
\subsubsection{RIS elements design}
In our proposed RIS structure design, each RIS consists of two types of elements: the conventional reflecting elements for signal reflection, and the {\textit{semi-active elements}} that are used for incident radio frequency (RF) signal processing. To be specific, for the semi-active elements in Fig.~\ref{f01}, only RF front-end, ADC, and down-conversion are required for obtaining I/Q sequences, and baseband processing such as signal decoding is not necessary. Different from the full-active RIS elements design for sensing and reflection~\cite{Active_sensing, Active_sensing_add, Active_sensing_2}, our proposed semi-active RIS elements involve lower power consumption and lower hardware complexity. 
\begin{figure}[t] 
  \captionsetup{font={footnotesize }}
\centerline{ \includegraphics[width=3.3in, height=2.35in]{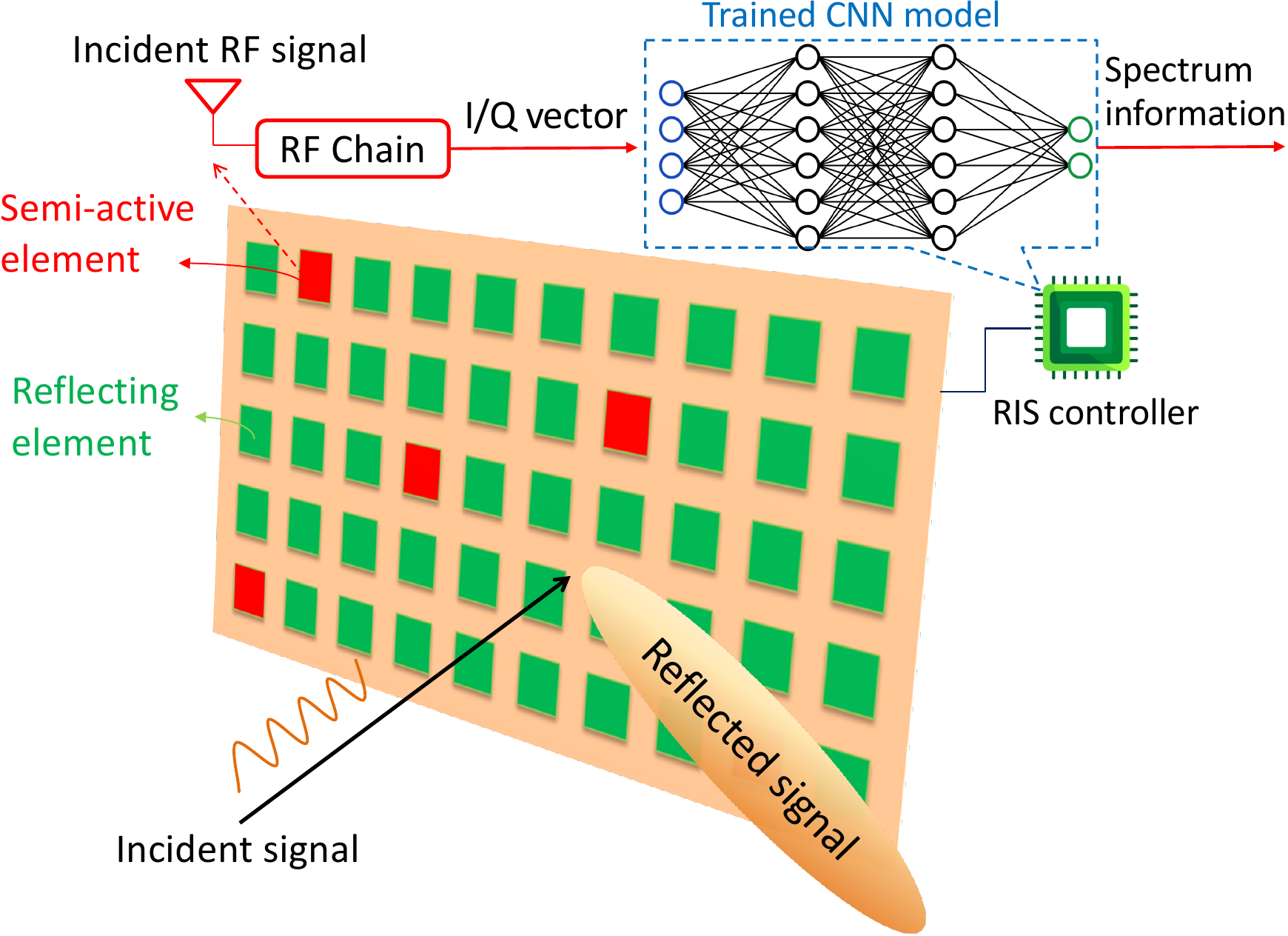}}
\caption{\textcolor{black}{RIS-aided spectrum sensing structure with semi-active RIS elements.}}
\label{f01}
\end{figure}}

\textcolor{black}{The existence of semi-active elements may lead to a loss of signal reflection\cite{RIS_sensing_add}, and thus the ratio/proportion of the semi-active elements need to be carefully chosen. In our case, deterioration of the signal reflection is not an issue since only a very small fraction of semi-active elements is required for obtaining the I/Q sequences without need for the baseband processing of the received RF signal\footnote{\textcolor{black}{According to~\cite{Active_sensing}, the achievable rate can be maximized with only a small fraction of active elements (1$\%$ and 7$\%$ for a high-frequency 28 GHz scenario and for a low-frequency 3.5 GHz scenario). The received signal can also be well recovered by deploying about 10$\%$ of the sensing elements~\cite{Active_sensing_2}.}}.
\subsubsection{Signaling overhead}
Compared to the conventional FL procedure,  the additional signaling overhead introduced in the proposed FSL framework includes the overhead for transmitting the CNN inference result and the updated model parameters via a dedicated channel. Due to the relatively small size of inference result and the high transmission power of the BS, the signaling overhead is very low and usually can be neglected.
\subsubsection{Storage analysis} 
Due to the diversity of the input RF I/Q profile, the CNN model needs to be trained for a wide range of SNR to adapt to the dynamic wireless channel. After the FL training, the fully-trained CNN model is stored at each RIS controller for online inference. Since the size of the pre-trained CNN model is relatively small\footnote{\textcolor{black}{Because of the limited computational resources of the RIS controller, the well pre-trained CNN model, e.g., around 2 Mbytes with about 300 thousand parameters~\cite{dami}, can be further quantized and become much smaller, e.g., within the Kbytes range.}}, the CNN model can be even directly cached into the memory of the RIS controller in advanced in order to perform the inference more efficiently.
\subsubsection{Computational complexity}
 To implement the proposed FSL framework, the major computational complexity lies in solving the phase shift optimization problem, the wireless bandwidth allocation problem, user-RIS association, and spectrum sensing via CNN inference. 
\begin{itemize}
\item Complexity for solving the phase shift optimization problem at the BS: To calculate the optimal phase shift, the complexity lies in computing ${\bm u}_k^*=e^{j({\rm arg} \left(h_{d,m} \right)-{\rm arg} \left({\rm diag} (\textit{\textbf g}_{k}) \textit{\textbf h}_{m,k}  \right))}$ for each RIS element. Denote the total number of elements of all the RISs as $J=\sum_{k=1}^{K}N_k$, where $N_k$ indicates the number of elements for the $k$th RIS. Then, the resulting complexity is ${\cal O}(J)$.
\item Complexity for solving the wireless bandwidth allocation problem at the BS: By using the bisection search, the complexity lies in checking the feasibility condition (\ref{condition1}). Therefore, the computational complexity is ${\cal O}(M {\rm log}_2(1/\epsilon))$ with accuracy $\epsilon$.
\item Complexity for user-RIS association at the RIS controllers: Assume worst case when the preferences of all the users for all the RISs are the same, the complexity is linear in the size of the input preference profiles~\cite{matching}, i.e., ${\cal O}(MK)$ with $M$ users and $K$ RISs.
\item Complexity for spectrum sensing via CNN inference at the RIS controllers: The trained CNN model has a quadratic time complexity during the inference process, i.e., ${\cal O}(n^2 c)$, where $c$ denotes the number of layers and $n$ denotes the number of neurons at each layer. Then, the resulting complexity for spectrum sensing is  ${\cal O}(n^2 c K)$ with $K$ RISs.
\end{itemize}}


\textcolor{black}{As a result, only a quadratic computational complexity, ${\cal O}(M+n^2 c)$, is obtained at each RIS controller. 
Since there is no exponential computational complexity, large communication overhead, or megabyte of data storage, our proposed RIS-aided FSL framework appears to be feasible in practical implementations and deployments.}
\section{Simulation Resutls and Discussions}\label{result}

 \subsection{Testing Results}
In this section, we present the testing results of the considered four-class inference example, by using real RF samples. In the considered example with two users and two RISs, the CNN model is trained through the proposed FSL framework to achieve RF signal classification. The testing results are illustrated in Table~\ref{accuracy_good}.

 \begin{table}[t]
  \captionsetup{font={footnotesize}} 
\caption{Inference accuracy of the converged CNN model.}   
 \label{accuracy_good}
\centering

\begin{tabular}{|c | l | l | l| } 
\hline
\textbf{Inferring Class} & $w=32$ & $w=128$ & $w=512$\\
\hline
Class-1 & 99.96$\%$ & 100.00$\%$ & 99.98$\%$ \\
\hline 
Class-2  &  98.51$\%$ & 98.10$\%$ & 96.23$\%$\\
\hline
Class-3   & 96.12$\%$ & 96.24$\%$ & 95.58$\%$ \\
\hline
Class-4 & 99.04$\%$ & 99.62$\%$ & 99.78$\%$  \\
\hline
\end{tabular}
\end{table}

 For each user in the considered scenario, historical RF traces are collected using a universal software radio peripheral (USRP2) testbed, which is wired connected via Gigabit Ethernet to a host server
with an implementation of the GNU Radio. To be specific, each user is emulated through a laptop that is responsible for baseband processing while a USRP2 platform is used for the up/down-conversion, the digital-to-analog/analog-to-digital conversion, and wireless transmission of the
signals. \textcolor{black}{Considering that each USRP2 testbed usually has one transmuting antenna, in order to collect the signals with USRP2, we let multiple USRP2 testbeds transmit RF signals simultaneously to a receiver.} Then the RF signals can be received and stored as I/Q sequences for training the CNN model, by including the wireless channel, for a wide range of SNR values (from $0$ to $20$ dB with interval of $5$ dB) in order to account for different channel conditions. Besides, the window size $w$ (i.e., the number of time steps of the collected RF data) in Table~\ref{accuracy_good} is $32$, $128$, and $512$, respectively.
 
 In the considered example, we assume that the RIS$_1$ and RIS$_2$ assist the wireless transmissions of $U_1$ and $U_2$, respectively, and the total wireless bandwidth is shared equally between the two users. For the local training, the users train the CNN with $80\%$ of RF dataset (i.e., I and Q samples), validate it by using $10\%$ of the dataset, and test it by using the remaining $10\%$ of the dataset. The trained CNN model consists of two convolutional (Conv) layers with ReLU activation functions, followed by two dense fully connected (FC) layers. The trained CNN model includes $256$ filters (1$\times$3) in the first Conv layer, $128$ filters (1$\times$3) in the second Conv layer, $256$ neurons in the first FC layer, and $9$ neurons in the second FC layer (output).
Within each FL round, the two users upload their local models once the local training ends. It is observed from Table~\ref{accuracy_good} that the inference accuracy of the converged CNN model with two RISs is, in general, greater than $95\%$. \textcolor{black}{Compared to other classes, the `Idle' class has the main characteristic that no user transmits and only background noise exists. Due to the {distinguishable pattern compared to the other three classes}, the CNN model predicts the `Idle' class (i.e., the Class-1 in Table~\ref{accuracy_good}) perfectly.}

\subsection{Simulation Setting}

In the considered simulation model, we consider $M$ users, $K$ RISs and one BS co-located with an edge server. The users are uniformly distributed in a square area of size $50\times50$ (in meters) with the BS is located at ($0$, $0$, $100$) in a three-dimensional Cartesian coordinates system. The location of the $k$th RIS is given by ($x_k$, $y_k$, $z_k$) = ($\frac{100}{k} {\rm cos} (45^o)$, $\frac{100}{k} {\rm cos} (45^o)$, $50$), as illustrated in Fig.~\ref{simulation_set}.
\begin{figure}[t]
  \centering
  \captionsetup{font={footnotesize }}
  \includegraphics[width=1.95in, height=1.75in]{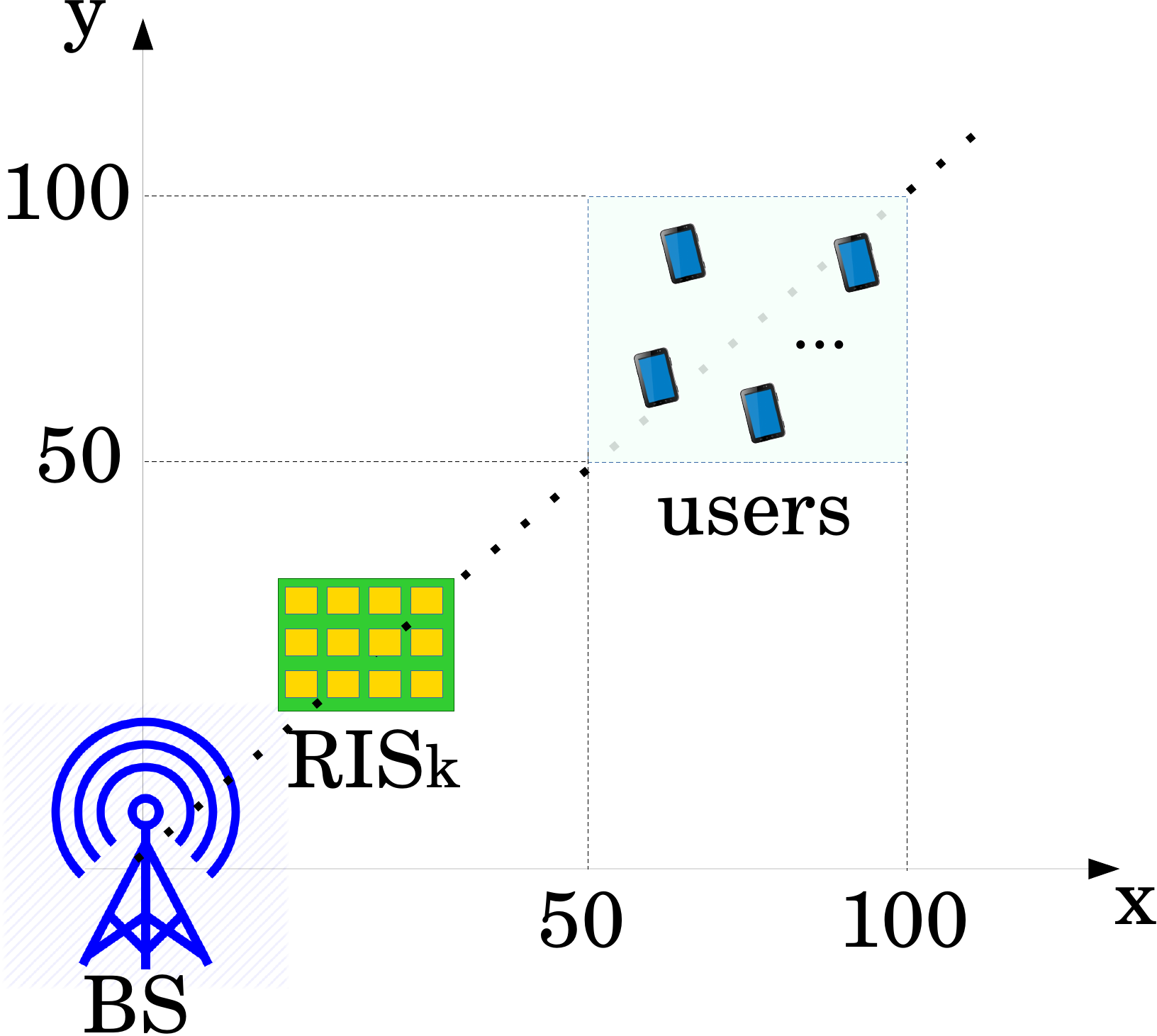}
  \caption{Simulation setup of the proposed FSL framework with multiple RISs (top view).}
  \label{simulation_set}
\end{figure}

 Unless stated otherwise, other simulation parameters are set as follows. Each RIS is equipped with $128$ reflecting elements. The transmit power of each user is $p_m = 100$ mW; the bandwidth is $B = 1$ MHz and the corresponding noise power density is ${\cal N}_0=-104$ dBm/Hz. The wireless channel gains are modeled using the 3GPP Urban Micro  with a carrier frequency of $3$ GHz. The local FL accuracy is $\theta=0.1$ and $v=1$. The computation ability of each user is $f_m=2.0$ GHz and the data size is $200$ kbits. The proposed FSL framework is simulated by using the Matlab Deep Learning Toolbox for RF fingerprinting.
 For comparison, we consider a {benchmark}: an FL algorithm that randomly determines the user-RIS association and the wireless bandwidth is equally allocated to all users.
 All statistical results are averaged over $10^4$ independent runs.

\vspace{-3mm}
\subsection{Prediction Accuracy}
\subsubsection{Prediction accuracy and training loss versus the number of iterations}
The prediction accuracy and training loss of the proposed FSL framework versus the number of iterations are shown in Fig.~\ref{iterations}(a) and Fig.~\ref{iterations}(b), respectively, where the number of RISs is $1$, $2$, and $4$, and the number of users is 2. From Fig.~\ref{iterations}(a), it is observed that as the number of iterations increases, the prediction accuracy of all considered schemes increases first and, then remains stable. The figures show that the FL algorithm converges after more than $200$ iterations. Accordingly, the training loss of all schemes shown in Fig.~\ref{iterations}(b) decreases as the number of iterations increases. Also we observe from Figs.~\ref{iterations}(a)-(b) that, as the number of RISs increases, the proposed FSL framework with more RISs can achieve a higher prediction accuracy and lower training loss.  This is because, as the number of RISs increases, cooperative spectrum sensing has a more pronounced effect and the number of local model parameters used for FL aggregation increases, thereby achieving better FL training.

\begin{figure}[t] 
\centering
\captionsetup{font={footnotesize }}
\subfigure[]{
\includegraphics[width=2.8in,height=1.55in]{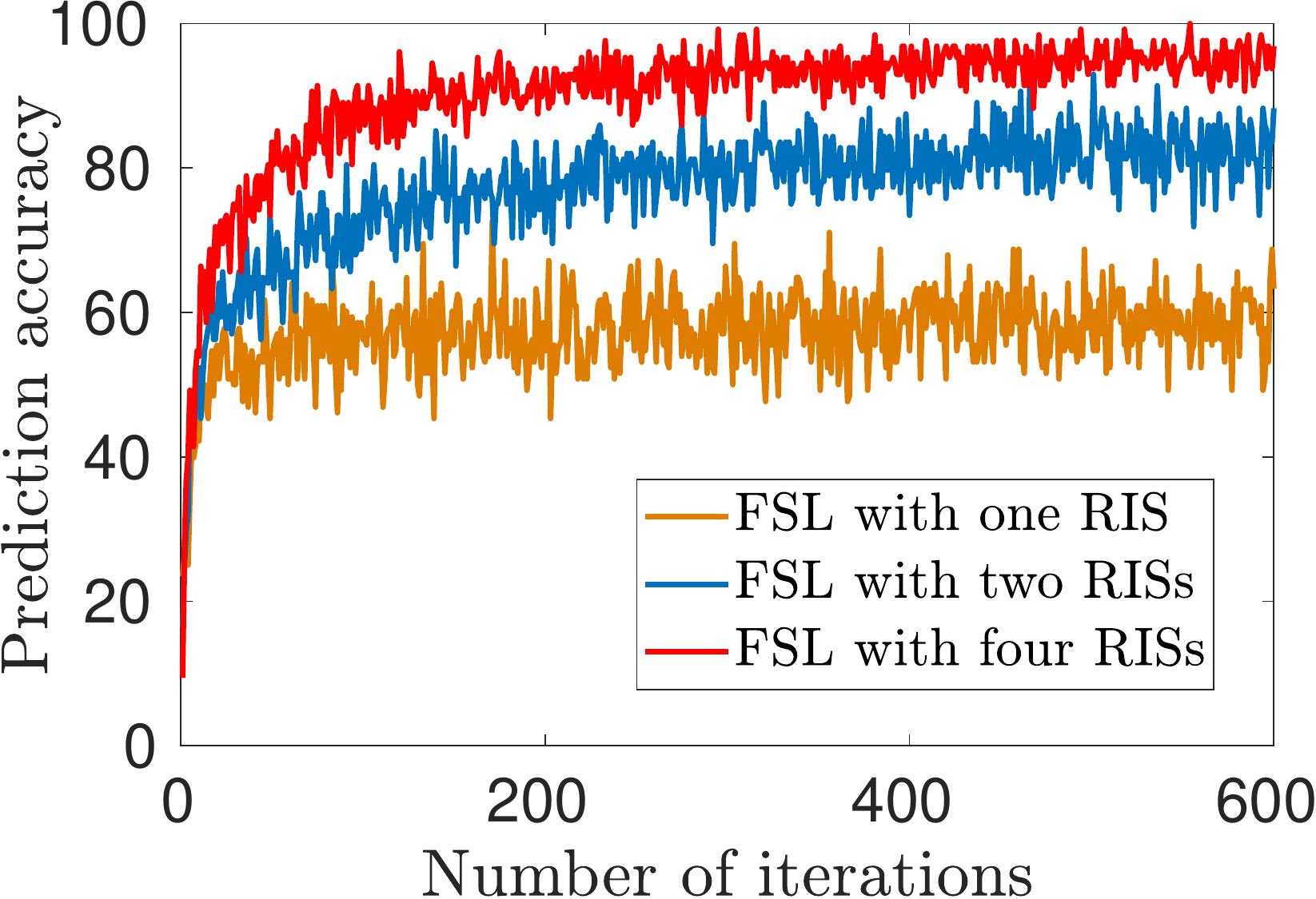}}
\hspace{3mm}
\subfigure[]{
\includegraphics[width=2.8in,height=1.55in]{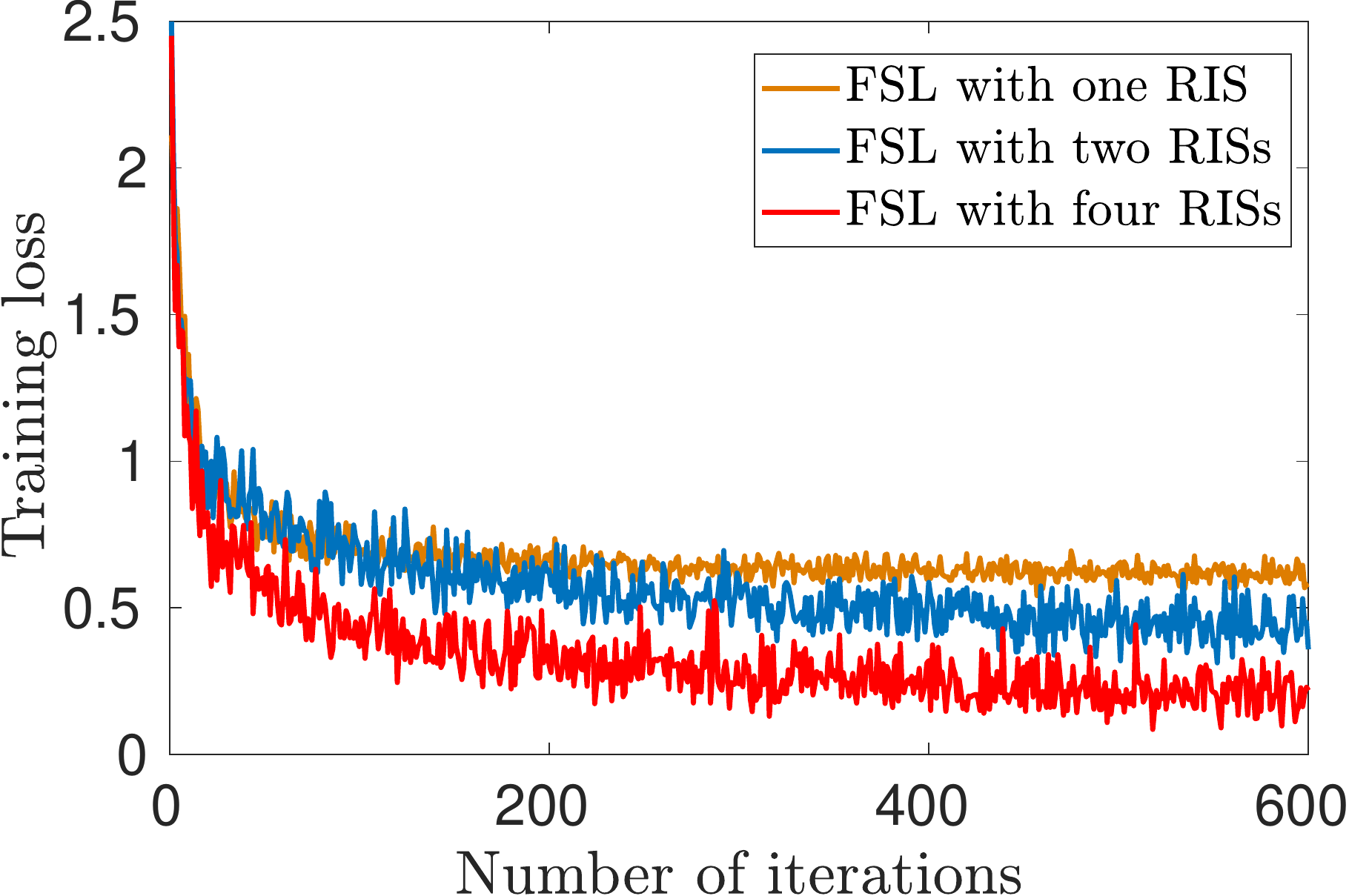}}
\caption{The prediction accuracy of the proposed FSL versus the number of iterations  is shown in (a), the training loss of the proposed FSL versus the number of iterations  is shown in (b), where the number of RISs are $1$, $2$, and $4$, respectively. }
\label{iterations} 
\end{figure}

\begin{figure}[t]
  \centering
  \captionsetup{font={footnotesize }}
  \includegraphics[width=3.1in, height=2.55in]{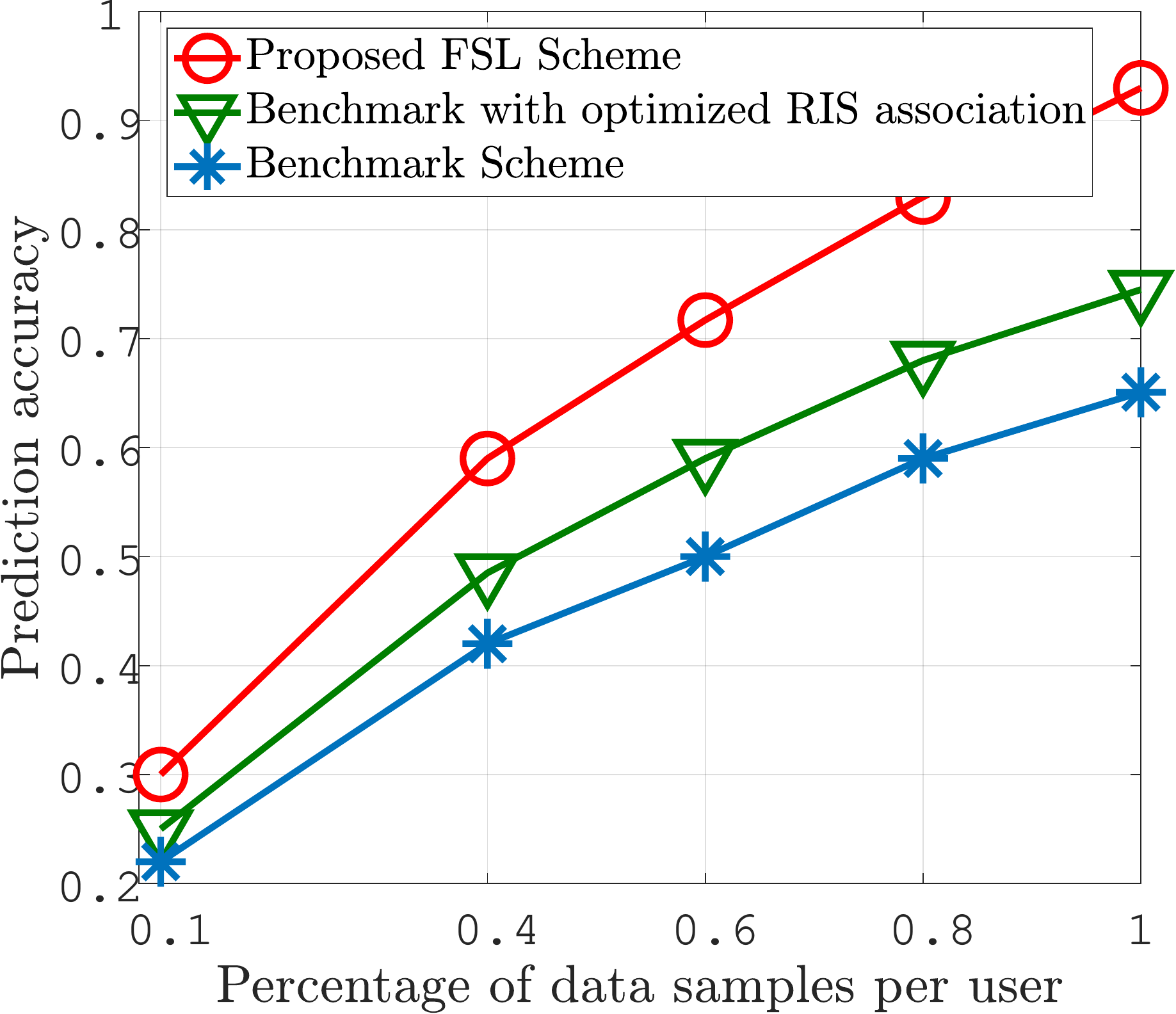}
  \caption{The prediction accuracy vs. the percentage of data samples of each user, where the number of RISs is $4$.}
  \label{accuracy_samples}
\end{figure}
\subsubsection{Prediction accuracy versus the number of training samples}
The prediction accuracy of three considered schemes as the number of training samples varies is illustrated in Fig.~\ref{accuracy_samples}, where there exist two users and four RISs. From Fig.~\ref{accuracy_samples}, we can observe that, as the percentage of training samples increases, the prediction accuracy of three schemes increases accordingly. This is because, as the number of training data samples increases, each user can use more data
samples to train their local FL models, thereby improving the prediction accuracy of the FL. We also see that when each user has $100 \%$ percentage of data samples for local training, the proposed FSL scheme can improve the prediction accuracy by up to $25\%$ and $43\%$, respectively, compared to the benchmark and that with an optimized association. These gains stem from the fact that a matching scheme is developed for user-RIS association to maximize the number of users served by RISs. Meanwhile, to compensate for the loss of users that are not associated with RISs, the proposed FSL scheme allocates appropriate wireless bandwidth for each user to increase the received local FL model parameters hence improving the prediction accuracy.

\begin{figure}[t]
  \centering
  \captionsetup{font={footnotesize }}
  \includegraphics[width=3.1in, height=2.55in]{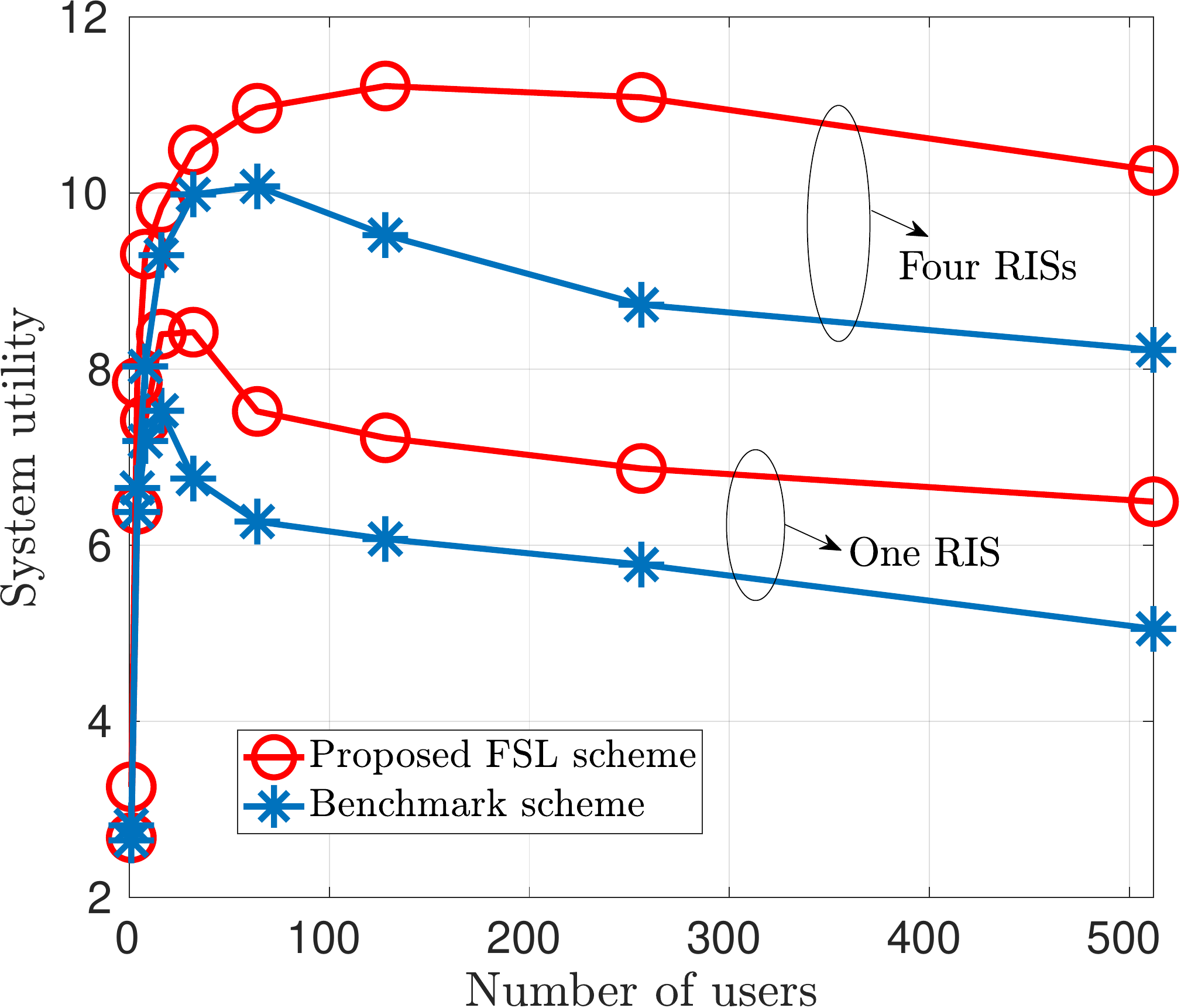}
  \caption{The system utility versus the number of users, where the number of elements of each RIS is $128$.}
  \label{utility_users}
\end{figure}
\begin{figure}[t]
  \centering
  \captionsetup{font={footnotesize }}
  \includegraphics[width=3.1in, height=2.55in]{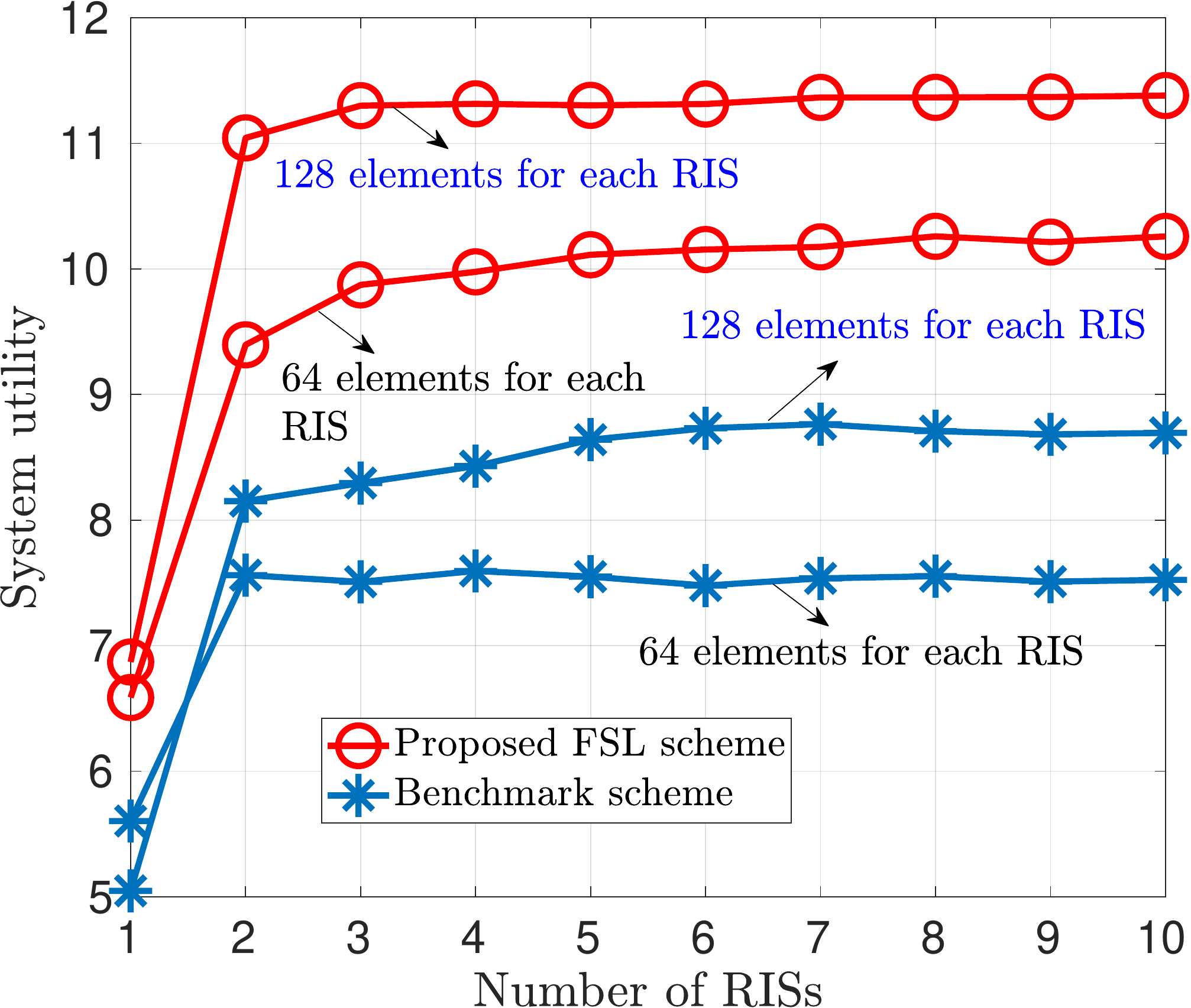}
  \caption{The system utility versus the number of RISs, where the number of users is $256$.}
  \label{utility_RISs}
\end{figure}
\vspace{-3mm}
\subsection{FL System Utility}
\subsubsection{System utility versus number of users}
The FL system utility of the two considered schemes versus the number of users is shown in Fig.~\ref{utility_users},  where the number of elements of each RIS is $128$. From this figure, we can see that, as the number of users increases, the system utility of all considered schemes increases first and then decays. This is because as the number of users increases, the number of local FL model parameters received at the edge server (denoted as ${\rm Q}=\sum_{m=1}^{M} Q(m), \ \forall m \in \cal M$) used for FL aggregation increases with relatively low communication delay. As the number of users continues to increase by contrast, the
system utility decreases slowly. This is mainly due to the fact that the allocated wireless bandwidth for each user decreases and the communication delay becomes large. However, there is no corresponding significant improvement on the number of received local FL model parameters at the edge server. Fig.~\ref{utility_users} also shows that, for a network with $256$ users, the system utility gain achieved by the proposed FSL scheme with four RISs is up to $60\%$ better than that with one RIS. This is because, for a dense users deployment network, the wireless links with poor channel quality can be improved significantly by deploying a large number of RISs.

\begin{figure}[t] 
\centering
\captionsetup{font={footnotesize }}
\subfigure[One RIS]{
\includegraphics[width=1.65in,height=1.15in]{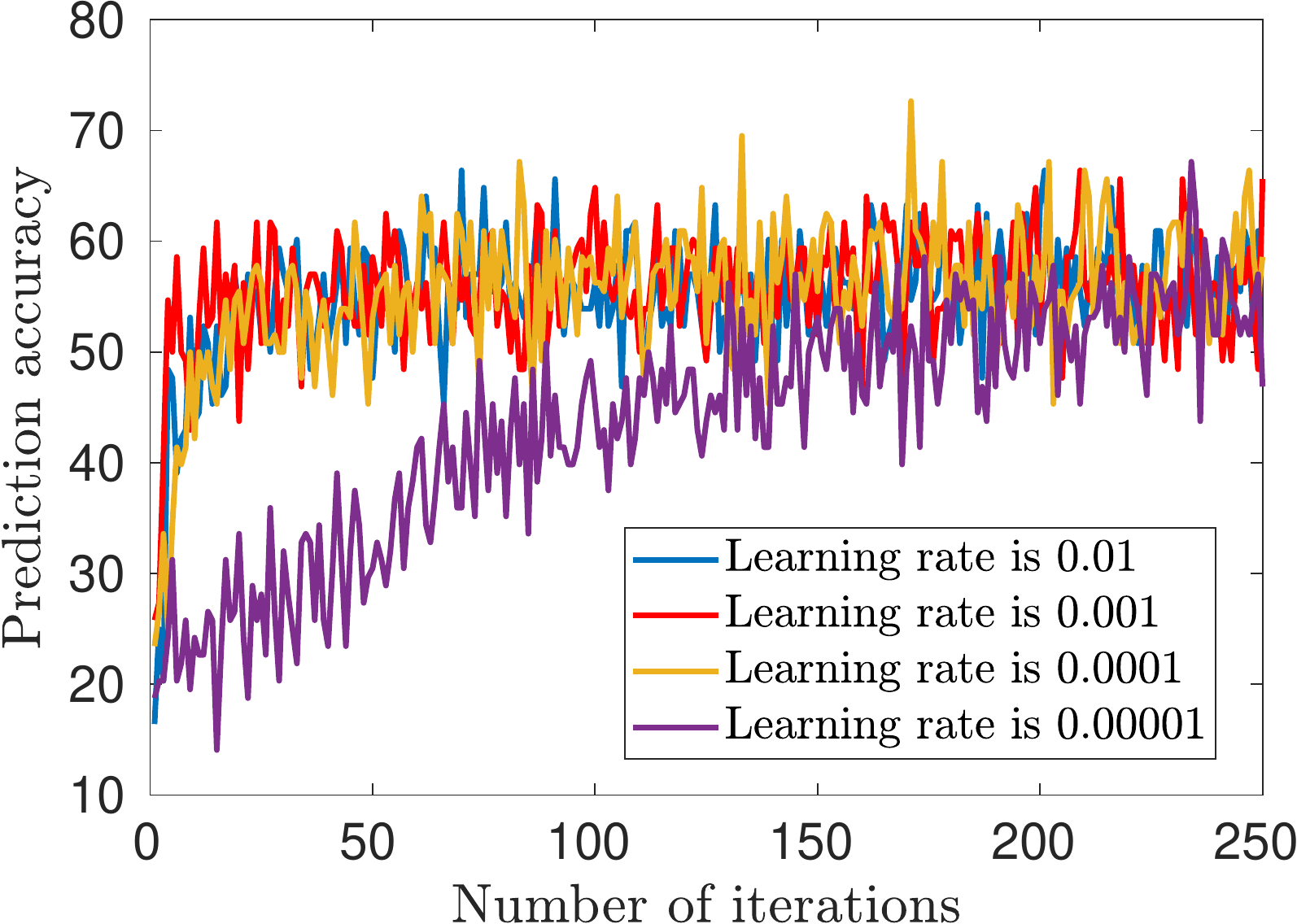}}
\subfigure[One RIS]{
\includegraphics[width=1.65in,height=1.15in]{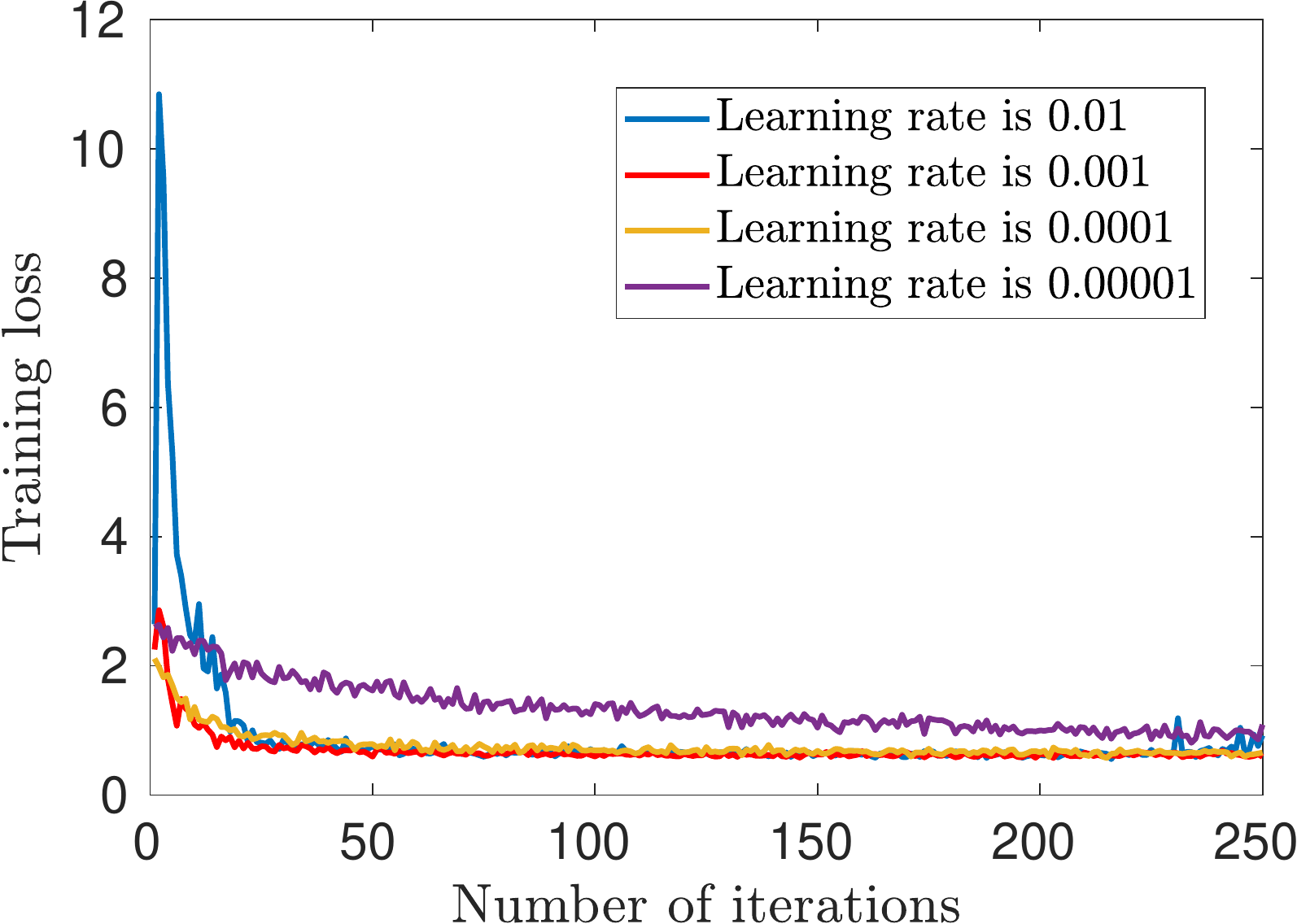}}
\subfigure[Two RISs]{
\includegraphics[width=1.65in,height=1.15in]{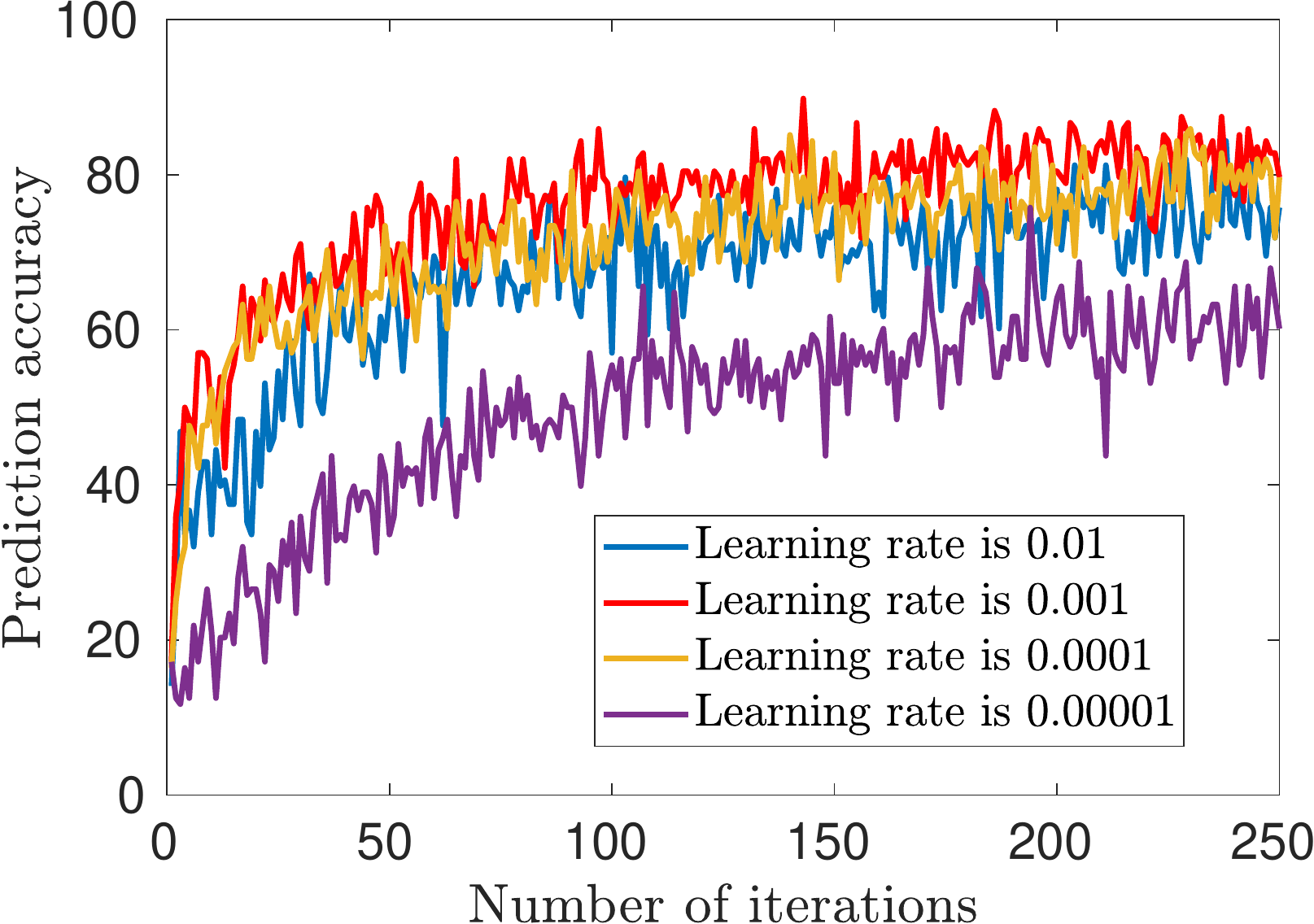}}
\subfigure[Two RISs]{
\includegraphics[width=1.65in,height=1.15in]{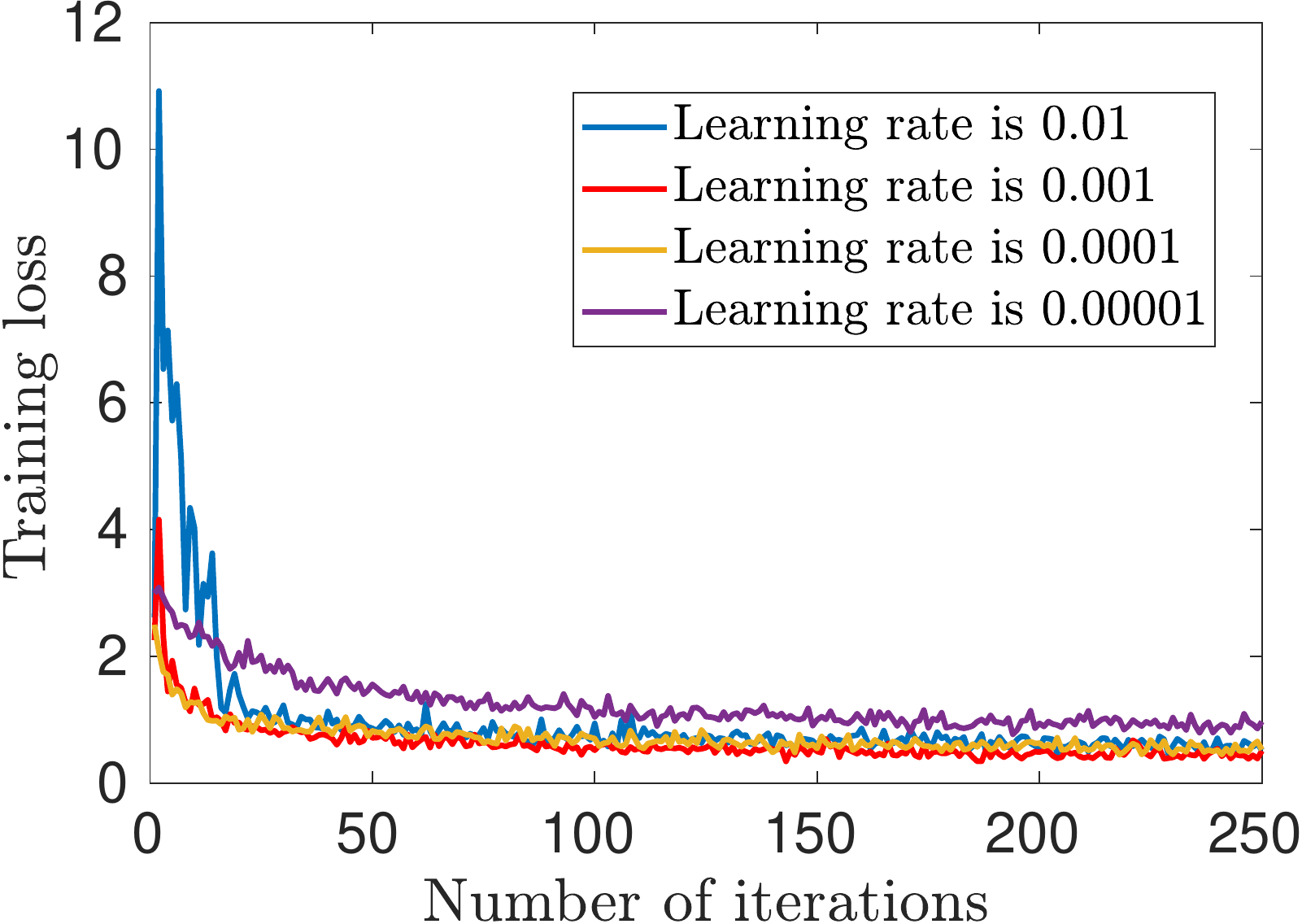}}
\subfigure[Four RISs]{
\includegraphics[width=1.65in,height=1.15in]{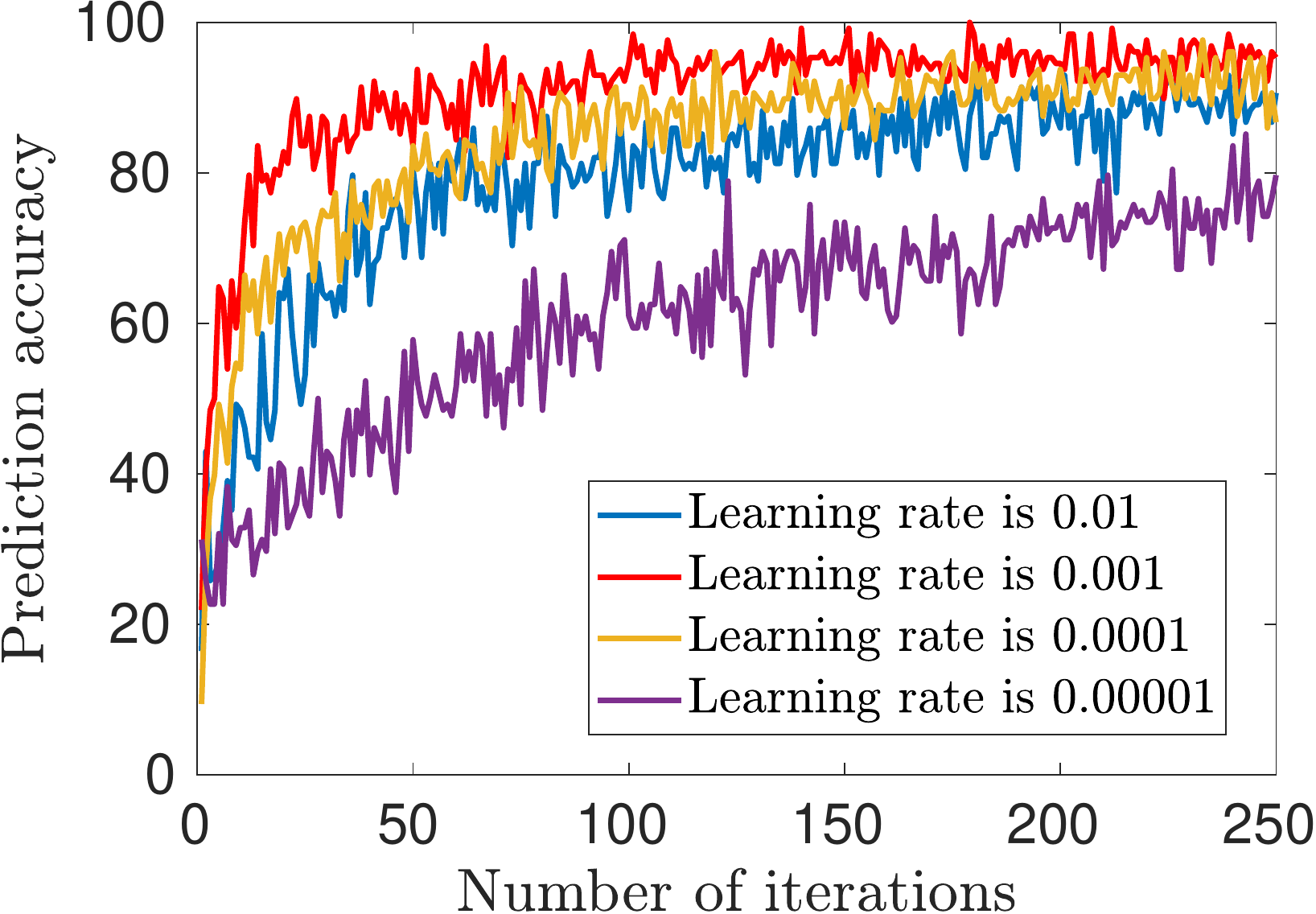}}
\subfigure[Four RISs]{
\includegraphics[width=1.65in,height=1.15in]{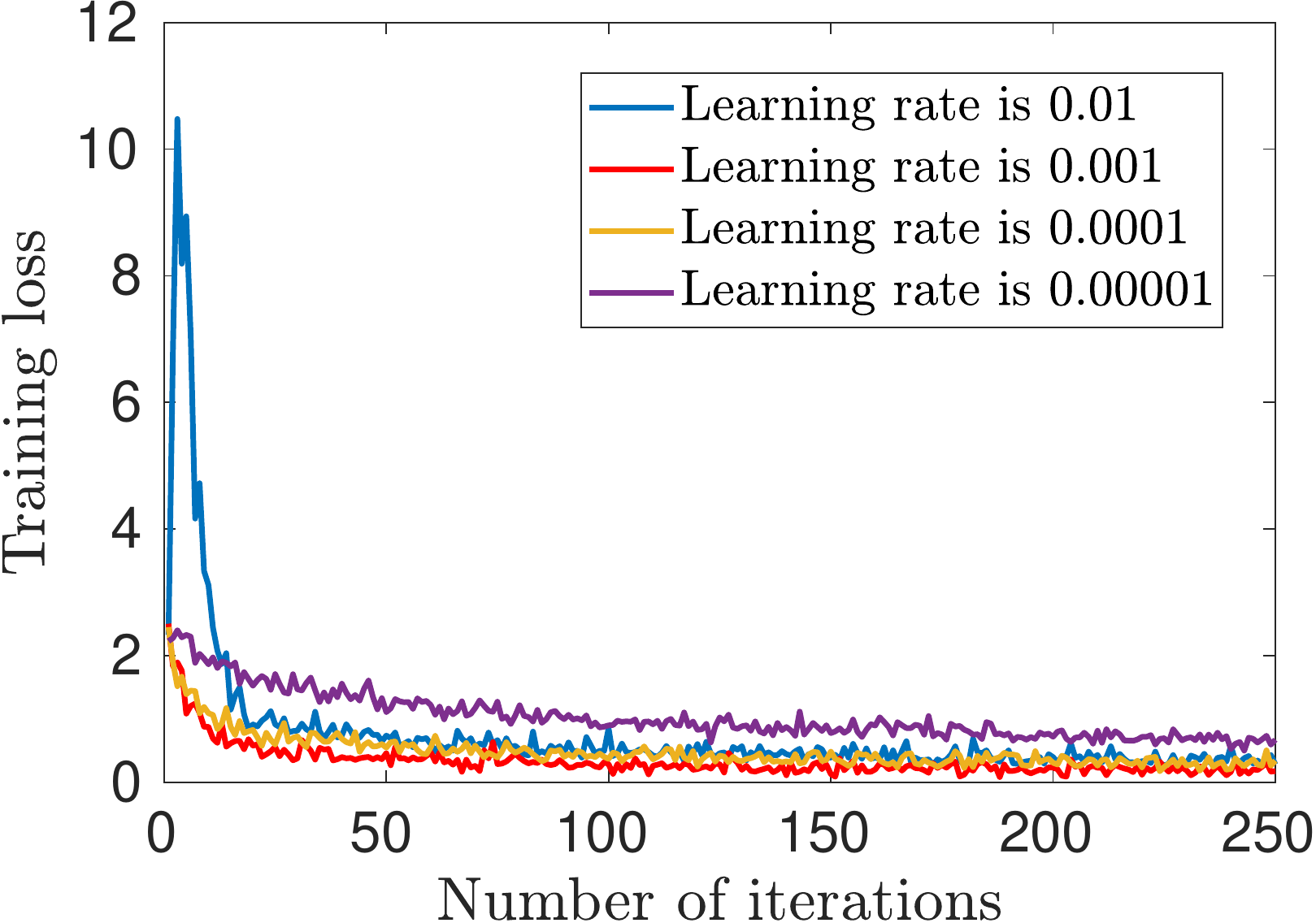}}
\caption{Prediction accuracy and training loss versus number of iterations under different learning rates.}
\label{learning_rate}
\end{figure} 

\subsubsection{System utility versus number of RISs}
The FL system utility of the two considered schemes versus the number of RISs is shown in Fig.~\ref{utility_RISs},  where the number of users is $256$.
One can observe from Fig.~\ref{utility_RISs} that the system utility of all the considered schemes first increases with the number of RISs and then remains stable. As the number of reflecting elements of each RIS increases, the system utility can be further improved.
This is due to the fact that, with more RISs with a larger number of reflecting elements, the RIS can generate more accurate passive reflective beamforming for the incident signals, thereby effectively improving the propagation conditions. Therefore, the edge server is capable of allowing more users to participate in FL so as to improve the system utility. In addition, since the benchmark scheme has a relatively high probability of failing to choose the optimal user-RIS association, we observe that the gap between the benchmark and the proposed FSL schemes becomes significant, especially when the number of RISs is large.

\subsection{Impact of Learning Rate}
The prediction accuracy and the training loss of the proposed FSL framework versus the number of iterations under different learning rates (i.e., $0.01$, $0.001$, $0.0001$, and $0.00001$) is illustrated in Figs.~\ref{learning_rate}(a)-(f), where the considered network is with only one RIS in (a) and (b), with two RISs in (c) and (d), and with four RISs in (e) and (f), respectively. To minimize the loss, the stochastic gradient descent with different learning rates is adopted at each iteration, where the size of each mini-batch is $256$. It is observed that the FL performance of the proposed FSL scheme is influenced by the learning rates.
Specifically, with only one RIS available, Figs.~\ref{learning_rate}(a)-(b) show a close performance of the FSL scheme with learning rates $0.01$, $0.001$ and $0.0001$, while the smallest learning rate $0.00001$ has the worse performance. Interestingly, as the number of RISs increases, e.g., in Figs.~\ref{learning_rate}(c)-(d) with two RISs, the FSL scheme with learning rate $0.001$ gradually outperforms the other three cases. This phenomenon is further demonstrated in Figs.~\ref{learning_rate}(e)-(f), where a network with four RISs is considered.
This is because a too large learning rate
increases the oscillation while a too small learning rate leads to over-fitting.
Therefore, an appropriate learning rate needs to be carefully selected in the proposed FSL framework  based on the number of RISs.

\section{Conclusion}\label{conclusion}
In this article, we explored the symbiotic interplay between federated learning and RISs. To achieve wireless spectrum learning in RISs-aided wireless edge networks, we proposed a novel FSL framework by jointly optimizing the phase shifts, user-RIS association and wireless bandwidth allocation. Simulation results demonstrated the advantages of the proposed FSL framework in terms of spectrum prediction accuracy and  system utility. The proposed FSL framework can be further explored to empower conventional RIS-aided networks with distributively-yet-intelligently `{think-and-decide}' mechanisms. 

 \begin{appendices} 
 \label{App}       

 \section{Proof of Theorem 2}       
 
 Suppose that the FL model is deployed at a central monitor (e.g., the edge server) for the centralized spectrum sensing, the achieved inference accuracy is obtained as $\epsilon$. In the proposed FSL framework, each RIS controller is deployed with a FL model, which cooperatively infers the spectrum information and sends to the edge server for fusion.
Let $K$ denote the number of RISs, the detection accuracy via cooperative spectrum sensing can be obtained as 
$\eta\!=\!1-(1-\epsilon)^K$~\cite{cooper_sensing}. So the performance improvement brought by RISs in the first stage is achieved as 
\begin{equation} 
\chi_1=\frac{\eta}{\epsilon}=\frac{1-(1-\epsilon)^K}{\epsilon}.
\end{equation}

In the conventional FL system, the probability that the local model of the $m$th user can be correctly received at the BS equals to the probability of $\gamma_m^{d} \geq \gamma_{T}$, i.e., 
\begin{equation} \label{p}
p_c={\rm{Pr}}\left \{\gamma_m^{d} \geq \gamma_{T} \right \}=\int_{\gamma_{T}}^{\infty } e^{-x}dx=e^{-\gamma_{T}},
\end{equation}
where $\gamma_m^{d}=\frac{{p_m} \left |h_{d,m} \right |^2 }{{\beta}_m B {\cal N}_0}$ denotes the received SNR of the $m$th user via the direct link.

In the proposed FSL framework, with the aid of RISs' reflection, the received SNR of the $m$th user is obtained as
 \begin{equation} \label{sinr_ris}
\gamma_m= \frac{{p_m} \left |h_{d,m}+ \sum_{k=1}^{K} r_{m,k} \textit{\textbf g}_{k} \mathbf{\Phi}_k \textit{\textbf h}_{m,k}  \right |^2 }{{\beta}_m B {\cal N}_0}.
\end{equation}

By rewriting (\ref{sinr_ris}), we have 
 \begin{equation} \label{sinr_ris1}
 \begin{aligned}
\sqrt{\gamma_m}&= \frac{{\sqrt{p_m}} }{\sqrt{{\beta}_m B {\cal N}_0}} \left |h_{d,m}+ \sum_{k=1}^{K} r_{m,k} \textit{\textbf g}_{k} \mathbf{\Phi}_k \textit{\textbf h}_{m,k}  \right | \\
& \leq \frac{{\sqrt{p_m}} }{\sqrt{{\beta}_m B {\cal N}_0}} \left( \left |h_{d,m}\right |+\left |\sum_{k=1}^{K} r_{m,k} \textit{\textbf g}_{k} \mathbf{\Phi}_k \textit{\textbf h}_{m,k}  \right | \right) \\ &=\sqrt{\gamma_m^d} + \frac{{\sqrt{p_m}} }{\sqrt{{\beta}_m B {\cal N}_0}} \left |\sum_{k=1}^{K} r_{m,k} \textit{\textbf g}_{k} \mathbf{\Phi}_k \textit{\textbf h}_{m,k}  \right |.
\end{aligned}
\end{equation}

Based on (\ref{sinr_ris1}), we have
 \begin{equation} \label{sinr_ris2}
 \begin{aligned}
\gamma_m&\leq \left(\sqrt{\gamma_m^d} + \frac{{\sqrt{p_m}} }{\sqrt{{\beta}_m B {\cal N}_0}} \left |\sum_{k=1}^{K} r_{m,k} \textit{\textbf g}_{k} \mathbf{\Phi}_k \textit{\textbf h}_{m,k}  \right | \right)^2 \\
&\triangleq \gamma_m^d + \kappa,
\end{aligned}
\end{equation}
where
\begin{equation} \nonumber
 \begin{aligned}
\kappa &\triangleq\frac{{{p_m}} }{{\beta}_m B {\cal N}_0} \left |\sum_{k=1}^{K} r_{m,k} \textit{\textbf g}_{k} \mathbf{\Phi}_k \textit{\textbf h}_{m,k}  \right |^2 \\
&+   \frac{2\sqrt{\gamma_m^d p_m} }{\sqrt{{\beta}_m B {\cal N}_0}} \left |\sum_{k=1}^{K} r_{m,k} \textit{\textbf g}_{k} \mathbf{\Phi}_k \textit{\textbf h}_{m,k}  \right |.
  \end{aligned}
\end{equation}

Therefore, the probability that the local models of the $m$th user can be correctly received at the BS is calculated as
\begin{equation} \label{p}
\begin{aligned}
p&={\rm{Pr}}\left \{{\gamma_m} \geq {\gamma_{T}} \right \} \leq  {\rm{Pr}}\left \{\gamma_m^d + \kappa \geq \gamma_{T} \right \} \\
&={\rm{Pr}}\left \{\gamma_m^{d} \geq \gamma_{T}- \kappa\right \}=\int_{\gamma_{T}\!-\! \kappa}^{\infty } e^{-x}dx=e^{-\gamma_{T}+ \kappa}.
\end{aligned}
\end{equation}

Therefore, the performance improvement in the second stage is limited by
\begin{equation} 
\chi_2=\frac{p}{p_c}\leq e^{\kappa}.
\end{equation}
 
 \vspace{2mm}

 \end{appendices}



\begin{thebibliography}{00}


\bibitem{ss_01}
H. Yang, A. Alphones, Z. Xiong, D. Niyato, J. Zhao and K. Wu, ``Artificial-intelligence-enabled intelligent 6G networks," \textit{IEEE Network}, vol. 34, no. 6, pp. 272-280, Dec. 2020.

\bibitem{ss_02}
S. Dang, O. Amin, B. Shihada and M. S. Alouini, ``What should 6G be?" \textit{Nature Electronics}, 3(1), 20-29, 2020.

\bibitem{cxl}
X. Cao, Z. Song, B. Yang, L. Qian and Z. Han, ``Full-duplex MAC in LAA/ Wi-Fi coexistence networks: design, modeling, and analysis," \textit{IEEE Transactions on Wireless Communications}, vol. 19, no. 8, pp. 5531-5546, Aug. 2020.

\bibitem{survey_sensing}
X. Li, F. Dong, S. Zhang, and W. Guo, ``A survey on deep learning techniques in wireless signal recognition," \textit{Wireless Communications and Mobile Computing}, vol. 2019, pp. 1-13, Feb. 2019.


\bibitem{deep_01}
A. Zappone, M. Di Renzo, and M. Debbah, ``Wireless networks design in the era of deep learning: Model-based, AI-based, or both?" \textit{IEEE Transactions on Communications}, vol. 67, no. 10, pp. 7331-7376, Oct. 2019.
 
\bibitem{TWC_YB}
B. Yang, X. Cao, Z. Han, and L. Qian, ``A machine learning enabled MAC framework for heterogeneous internet-of-things networks," \textit{IEEE Transactions on Wireless Communications}, vol. 18, no. 7, pp. 3697-3712, Jul. 2019.

\bibitem{WCL_sensing}
J. Gao, X. Yi, C. Zhong, X. Chen, and Z. Zhang, ``Deep learning for spectrum sensing," \textit{IEEE Wireless Communications Letters}, vol. 8, no. 6, pp. 1727-1730, Dec. 2019.

\bibitem{Access_YB}
B. Yang, X. Cao, O. Omotere, X. Li, Z. Han, and L. Qian, ``Improving medium access efficiency with intelligent spectrum learning," \textit{IEEE Access}, vol. 8, pp. 94484-94498, 2020.

\bibitem{TMC_sensing}
K. Davaslioglu, S. Soltani, T. Erpek and Y. Sagduyu, ``DeepWiFi: cognitive WiFi with deep learning," \textit{IEEE Transactions on Mobile Computing}, vol. 20, no. 2, pp. 429-444, Feb. 2021.

\bibitem{FL_google}
K. Bonawitz, \textit{et al.} ``Towards federated learning at scale: system design,” in Proc. of \textit{Systems and Machine Learning Conference}, Stanford, CA, USA, 2019.

\bibitem{DL_edge}
J. Park, S. Samarakoon, M. Bennis, and M. Debbah, ``Wireless network intelligence at the edge," \textit{Proceedings of the IEEE}, vol. 107, no. 11, pp. 2204-2239, Nov. 2019.

\bibitem{EURASIP}
M. Di Renzo \textit{et al}. ``Smart radio environments empowered by reconfigurable AI meta-surfaces: an idea whose time has come," \textit{EURASIP Journal on Wireless Communications and Networking},  vol. 2019, no. 129, pp. 1-20, May 2019.

\bibitem{JSAC_ris_02}
M. Di Renzo, A. Zappone, M. Debbah, M. -S. Alouini, C. Yuen, J. d. Rosny, and S. Tretyakov, 
``Smart radio environments empowered by reconfigurable intelligent surfaces: how it works, state of research, and road ahead," \textit{IEEE Journal on Selected Areas in Communications}, vol. 38, pp. 2450-2525, Nov. 2020.

\bibitem{Huang02}
C. Huang, A. Zappone, G. C. Alexandropoulos, M. Debbah, and C. Yuen, ``Reconfigurable  intelligent  surfaces  for  energy  efficiency  in wireless communication," \textit{IEEE Transactions on Wireless Communications}, vol. 18, no. 8, pp. 4157-4170, Aug. 2019.

\bibitem{WU01}
Q. Wu, and R. Zhang, ``Intelligent reflecting surface enhanced wireless network via joint active and passive beamforming," \textit{IEEE Transactions on Wireless Communications}, vol. 18, no. 11, pp. 5394-409, Aug. 2019.

\bibitem{jsac}
T. Bai, C. Pan, Y. Deng, M. Elkashlan, A. Nallanathan and L. Hanzo, ``Latency minimization for intelligent reflecting surface aided mobile edge computing," \textit{IEEE Journal on Selected Areas in Communications}, vol. 38, no. 11, pp. 2666-2682, Nov. 2020.

\bibitem{THz_01}
C. Chaccour, M. N. Soorki, W. Saad, M. Bennis and P. Popovski, ``Risk-based optimization of virtual reality over terahertz reconfigurable intelligent surfaces," in Proc. of \textit{IEEE ICC}, Dublin, Ireland, 2020.

\bibitem{JSAC_ris_01}
W. Tang \textit{et al.}, ``MIMO transmission through reconfigurable intelligent surface: system design, analysis, and implementation," \textit{IEEE Journal on Selected Areas in Communications}, vol. 38, no. 11, pp. 2683-2699, Nov. 2020.


\bibitem{holo_01}
C. Huang, S. Hu, G. C. Alexandropoulos, A. Zappone, C. Yuen, R. Zhang, M. D. Renzo, M. Debbah, ``Holographic MIMO surfaces for 6G wireless networks: opportunities, challenges, and trends," \textit{IEEE Wireless Communications}, vol. 27, no. 5, pp. 118-125, Oct. 2020.


\bibitem{holo_02}
Z. Wan, Z. Gao, F. Gao, M. D. Renzo and M. -S. Alouini, ``Terahertz Massive MIMO With Holographic Reconfigurable Intelligent Surfaces," \textit{IEEE Transactions on Communications}, vol. 69, no. 7, pp. 4732-4750, Jul. 2021.


 \bibitem{cooperative_01}
Z. Shi, W. Gao, S. Zhang, J. Liu and N. Kato, ``AI-enhanced cooperative spectrum sensing for non-orthogonal multiple access," \textit{IEEE Wireless Communications}, vol. 27, no. 2, pp. 173-179, Apr. 2020.

 \bibitem{cooperative_02}
H. He and H. Jiang, ``Deep learning based energy efficiency optimization for distributed cooperative spectrum sensing," \textit{IEEE Wireless Communications}, vol. 26, no. 3, pp. 32-39, Jun. 2019.


 \bibitem{cooperative_03}
Z. Luo, S. Zhao, Z. Lu, J. Xu and Y. Sagduyu, ``When attackers meet AI: learning-empowered attacks in cooperative spectrum sensing," \textit{IEEE Transactions on Mobile Computing}, pp. 1-1,  Oct. 2020.

 \bibitem{FL_RIS_01}
K. Yang, Y. Shi, Y. Zhou, Z. Yang, L. Fu and W. Chen, ``Federated machine learning for intelligent IoT via reconfigurable intelligent surface," \textit{IEEE Network}, vol. 34, no. 5, pp. 16-22, Oct. 2020.

\bibitem{RIS_FL_03}
Z. Wang, J. Qiu, Y. Zhou, Y. Shi, L. Fu, W. Chen, and K.B. Lataief, ``Federated learning via intelligent reflecting surface," \textit{IEEE Transactions on Wireless Communications},  pp. 1-1, Nov. 2020.

\bibitem{RIS_FL_04}
T. Jiang and Y. Shi, ``Over-the-air computation via intelligent reflecting surfaces," in Proc. of \textit{IEEE GLOBECOM}, Waikoloa, HI, USA, Dec. 2019.

\bibitem{RIS_FL_01} 
W. Ni, Y. Liu, Z. Yang, H. Tian, and X. Shen, ``Federated learning in multi-RIS aided systems," \textit{IEEE Internet of Things Journal}, early access, Nov. 2021.

\bibitem{FL_chen_01}
M. Chen, Z. Yang, W. Saad, C. Yin, H. V. Poor and S. Cui, ``A joint learning and communications framework for federated learning over wireless networks," \textit{IEEE Transactions on Wireless Communications}, vol. 20, no. 1, pp. 269-283, Jan. 2021.


\bibitem{TOC_FL}
S. Samarakoon, M. Bennis, W. Saad and M. Debbah, ``Distributed federated learning for ultra-reliable low-latency vehicular communications," \textit{IEEE Transactions on Communications}, vol. 68, no. 2, pp. 1146-1159, Feb. 2020.

%

\bibitem{straggler_03}
K. Yang, T. Jiang, Y. Shi and Z. Ding, ``Federated learning via over-the-air computation," \textit{IEEE Transactions on Wireless Communications}, vol. 19, no. 3, pp. 2022-2035, Mar. 2020.

\bibitem{straggler_04}
S. Wang \textit{et al.}, ``Adaptive federated learning in resource constrained edge computing systems," \textit{IEEE Journal on Selected Areas in Communications}, vol. 37, no. 6, pp. 1205-1221, Jun. 2019.

\bibitem{add_cxl}
X. Cao, B. Yang, C. Huang, G. C. Alexandropoulos, C. Yuen, Z. Han, 
H. V. Poor, and L. Hanzo, ``Massive access of static and mobile users via reconfigurable intelligent surfaces: protocol design and performance analysis,"  \textit{IEEE Journal on Selected Areas in Communications}, vol. 40, no. 4, pp. 1253-1269, April 2022.



\bibitem{cao_magazine}
X. Cao, B. Yang, C. Huang, C. Yuen, M. Di Renzo, Z. Han, D. Niyato, H. V. Poor and L. Hanzo,  ``AI-assisted MAC for reconfigurable intelligent surface-aided wireless networks: challenges and opportunities," \textit{IEEE Communication Magazine}, vol. 59, no. 6, pp. 21-27, June 2021.

\bibitem{MetaSensing}
J. Hu, H. Zhang, K. Bian, M. Di Renzo, Z. Han, and L. Song, ``MetaSensing: intelligent metasurface assisted RF 3D sensing by deep reinforcement learning," \textit{IEEE Journal on Selected Areas in Communications}, vol. 39, no. 7, pp. 2182-2197, July 2021.

\bibitem{RIS_RL_02}
C. Huang, R. Mo and C. Yuen, ``Reconfigurable intelligent surface assisted multiuser MISO systems exploiting deep reinforcement learning," \textit{IEEE Journal on Selected Areas in Communications}, vol. 38, no. 8, pp. 1839-1850, Aug. 2020. 

\bibitem{JSAC_cao}
X. Cao, B. Yang, C. Huang, C. Yuen, M. Di Renzo, D. Niyato and Z. Han,  ``Reconfigurable intelligent surface-assisted aerial-terrestrial communications via multi-task learning," \textit{IEEE Journal on Selected Areas in Communications}, vol. 39, no. 10, pp. 3035-3050, Oct. 2021.


 \bibitem{RIS_RL_01}
H. Yang, Z. Xiong, J. Zhao, D. Niyato, Q. Wu, HV Poor, and M. Tornatore, ``Intelligent reflecting surface assisted anti-jamming communications: a fast reinforcement learning approach," \textit{arXiv preprint arXiv:2004.12539}. Apr. 2020.



\bibitem{DL_RIS_01}
S. Liu, Z. Gao, J. Zhang, M. D. Renzo and M. -S. Alouini, ``Deep denoising neural network assisted compressive channel estimation for mmWave intelligent reflecting surfaces," \textit{IEEE Transactions on Vehicular Technology}, vol. 69, no. 8, pp. 9223-9228, Aug. 2020.

\bibitem{DL_RIS_02}
 M. A. Aygul, M. Nazzal, and H. Arslan, ``Deep learning-based optimal RIS interaction exploiting previously sampled channel correlations," in Proc. of \textit{IEEE WCNC}, Nanjing, China, May 2021.
 
 \bibitem{DL_RIS_03}
 A. M. Elbir, \textit{et al.}, ``Deep channel learning for large intelligent surfaces aided mm-wave massive MIMO systems," \textit{IEEE Wireless Communications Letters}, vol. 9, no. 9, pp. 1447-1451, Sept. 2020.
 
 
  \bibitem{RIS_FPGA}
  X. Pei, \textit{et al.}, ``RIS-aided wireless communications: prototyping, adaptive beamforming, and indoor/outdoor field trials," \textit{IEEE Transactions on Communications}, vol. 69, no. 12, pp. 8627-8640, Dec. 2021.


\bibitem{CSI01}
\textcolor{black}{D. Mishra and H. Johansson, ``Channel estimation and low-complexity beamforming design for passive intelligent surface assisted MISO wireless energy transfer,” in Proc. of \textit{IEEE ICASSP}, Brighton, United Kingdom, May 2019, pp. 4659-4663.}

\bibitem{CSI02}
\textcolor{black}{C. You, B. Zheng, and R. Zhang, ``Intelligent reflecting surface with discrete phase shifts: Channel estimation and passive beamforming," in Proc. of  \textit{IEEE ICC}, Dublin, Ireland, July 2020.}

\bibitem{CSI03}
\textcolor{black}{M. Jung, W. Saad, M. Debbah, and C.S. Hong, ``On the optimality of reconfigurable intelligent surfaces (RISs): Passive beamforming, modulation, and resource allocation," \textit{IEEE Transactions on Wireless Communications}, vol. 20, no. 7, pp. 4347-4363, July 2021.}

 \bibitem{Doppler}
Basar, E., ``Reconfigurable intelligent surfaces for Doppler effect and multipath fading mitigation," \textit{Frontiers in Communications and Networks}, 14, May. 2021. 

\bibitem{interferences}
S. Venkatesh, \textit{et al.}, ``A high-speed programmable and scalable terahertz holographic metasurface based on tiled CMOS chips," \textit{Nature Electronics,} vol. 3, pp. 785-793, Dec. 2020.
 
 \bibitem{infocom}
 N. H. Tran, W. Bao, A. Zomaya, M. N. H. Nguyen and C. S. Hong, ``Federated learning over wireless networks: optimization model design and analysis," In Proc. of \textit{IEEE INFOCOM}, Paris, France, May 2019.
 
 
 







 \bibitem{tmc_yb}
B. Yang, X. Cao, J. Bassey, X. Li and L. Qian, ``Computation offloading in multi-access edge computing: a multi-task learning approach," \textit{IEEE Transactions on Mobile Computing}, pp. 1-1, April 2020. 

\bibitem{Tammer}
K. Tammer, ``The application of parametric optimization and imbedding to the foundation and realization of a generalized primal decomposition approach," \textit{Mathematical research}, vol. 35, pp. 376-386, 1987.

\bibitem{group}
B. Zheng, C. You and R. Zhang, ``Intelligent reflecting surface assisted multi-user OFDMA: channel estimation and training design," \textit{IEEE Transactions on Wireless Communications}, vol. 19, no. 12, pp. 8315-8329, Dec. 2020.


 \bibitem{YB_TVT}
B. Yang, X. Cao, C. Huang, C. Yuen, L. Qian and M. D. Renzo, ``Intelligent spectrum learning for wireless networks with reconfigurable intelligent surfaces," \textit{IEEE Transactions on Vehicular Technology}, vol. 70, no. 4, pp. 3920-3925, Apr. 2021.

\bibitem{binary_search}
J. S. Wilson, \textit{et al.}, ``Test and measurement: know it all," Newnes, Sep. 2008.

\bibitem{Active_sensing}
A. Taha, M. Alrabeiah and A. Alkhateeb, ``Enabling large intelligent surfaces with compressive sensing and deep learning," \textit{IEEE Access}, vol. 9, pp. 44304-44321, Mar. 2021.

\bibitem{Active_sensing_add}
I. Alamzadeh, G. C. Alexandropoulos, N. Shlezinger, and M. F. Imani,  ``A reconfigurable intelligent surface with integrated sensing capability," \textit{Scientific Reports}, vol. 11, no. 1, pp. 1-10, 2021.

\bibitem{Active_sensing_2}
G. C. Alexandropoulos, N. Shlezinger, I. Alamzadeh, M. F. Imani, H.
Zhang, and Y. C. Eldar, ``Hybrid reconfigurable intelligent metasurfaces:
enabling simultaneous tunable reflections and sensing for 6G wireless
communications," \textit{arXiv preprint arXiv:2104.04690}, Apr. 2021.

\bibitem{RIS_sensing_add}
G. C. Alexandropoulos and E. Vlachos, ``A hardware architecture for reconfigurable intelligent surfaces with minimal active elements for explicit channel estimation," in Proc. of \textit{IEEE ICASSP}, May 2020, Barcelona, Spain.


\bibitem{dami}
D. Adesina, J. Bassey, and L. Qian, ``Robust deep radio frequency spectrum learning for future wireless communications systems," \textit{IEEE Access}, vol. 8, pp. 148528-148540, Aug. 2020.

\bibitem{matching}
S. M. A. Kazmi\textit{ et al.}, ``Mode selection and resource allocation in device-to-device communications: a matching game approach," \textit{IEEE Transactions on Mobile Computing}, vol. 16, no. 11, pp. 3126-3141, Nov. 2017.


\bibitem{cooper_sensing}
W. Zhang and K. B. Letaief, ``Cooperative spectrum sensing with transmit and relay diversity in cognitive radio networks," \textit{IEEE Transactions on Wireless Communications}, vol. 7, no. 12, pp. 4761-4766, Dec. 2008.






 \end{thebibliography}
\end{document}